\DeclareMathAlphabet{\mathcal}{OMS}{cmsy}{m}{n}
\newcommand{\labTriEq}[1]{\ensuremath{\triangleq_{{\tiny #1}}}}
\newcommand{\genOf}[1]{\ensuremath{\preceq_{{\tiny #1}}}}
\newcommand{\approxOf}[1]{\ensuremath{\approx_{{\tiny #1}}}}
\newcommand{\deriv}[1]{\ensuremath{\overset{{{\tiny #1}}}{\Longrightarrow}}}
\newcommand{\ddec}{\deriv{\mbox{\it Dec}}}
\newcommand{\dsolve}{\deriv{\mbox{\it Sol}}}
\newcommand{\ddsolve}[1]{\deriv{\ensuremath{\mbox{\it Sol } #1}}}
\newcommand{\dexpalf}{\deriv{\mbox{\it ExpLA1}}}
\newcommand{\dexparf}{\deriv{\mbox{\it ExpRA1}}}
\newcommand{\dexpals}{\deriv{\mbox{\it ExpLA2}}}
\newcommand{\dexpars}{\deriv{\mbox{\it ExpRA2}}}
\newcommand{\dmer}{\deriv{\mbox{\it Mer}}}
\newcommand{\mcsg}{\mathit{mcsg}}
\newcommand{\mcsgabs}{\mathit{mcsg}_{\tiny \Abs}}
\newcommand{\rdec}{(Dec)}
\newcommand{\rsolve}{(Sol)}
\newcommand{\rexpalf}{(ExpLA1)}
\newcommand{\rexpals}{(ExpLA2)}
\newcommand{\rexpars}{(ExpRA2)}
\newcommand{\rmer}{(Mer)}
\newcommand{\AU}{\textsc{AUnif}}
\newcommand{\Abs}{\mbox{\tt Abs}}
\newcommand{\Abst}{\ensuremath{\mbox{\textuparrow}}}
\newcommand{\RVar}{\mbox{\it Rvar}}
\newcommand{\Dom}{\mbox{\it Dom}}
\newcommand{\labels}[1]{\ensuremath\mbox{\it labels}(#1)}
\begin{document}
\title{Equational Anti-Unification over Absorption Theories}


\author{Mauricio Ayala-Rincón\inst{1} \and David Cerna\inst{2} \and
Andrés Felipe González Barragán\inst{1} \and  Temur Kutsia\inst{3}}
\institute{Universidade de Brasília\\
\email{ayala@unb.br, andres.felipe@aluno.unb.br}
\and
Czech Academy of Sciences Institute for Computer Science
\\ \email{dcerna@cs.cas.cz}
\and
RISC, Johannes Kepler University Linz\\
\email{kutsia@risc.jku.at}
}

\maketitle 
\begin{abstract} 
Interest in anti-unification, the dual problem of unification, is on the rise due to applications within the field of software analysis and related areas. For example, anti-unification-based techniques have found uses within clone detection and automatic program repair methods. While syntactic forms of anti-unification are enough for many applications, some aspects of software analysis methods are more appropriately modeled by reasoning modulo an equational theory. Thus, extending existing anti-unification methods to deal with important equational theories is the natural step forward. This paper considers anti-unification modulo pure absorption theories, i.e., some operators are associated with a special constant satisfying the axiom $f(x,\varepsilon_f) \approx f(\varepsilon_f,x) \approx \varepsilon_f$. We provide a sound and complete rule-based algorithm for such theories. Furthermore, we show that anti-unification modulo absorption is infinitary. Despite this, our algorithm terminates and produces a finitary algorithmic representation of the minimal complete set of solutions. We also show that the linear variant is finitary.
 \end{abstract}

\section{Introduction}

Anti-unification (AU) is a fundamental operation
for reasoning about generalizations of formal objects. It is the dual operation to unification. The seminal works of Plotkin and Reynolds, introducing the area, were published more than fifty years ago (\cite{Plotkin70,Reynolds70}); however, only recently has interest in the development of foundations of AU from the software and related communities gained attraction. This recent tendency is mainly due to the significance of generalization operations within frameworks crucial for software analysis and related areas \cite{DBLP:conf/ijcai/CernaK23}. In contrast to unification, where identifying the equivalence classes induced by a set of expressions is the main objective, AU methods search for the least general commonalities induced by a set of expressions. Investigations have exploited AU methods for various applications such as the implementation of efficient parallel compilers \cite{DBLP:journals/fgcs/BarwellBH18}, plagiarism detection and code cloning \cite{DBLP:conf/lopstr/VanhoofY19,DBLP:journals/tplp/YernauxV19,DBLP:conf/csl/YernauxV22}, automated bug detection and fixing \cite{DBLP:conf/sbes/SousaSGBD21,DBLP:conf/nsdi/MehtaB0BMAABK20}, and library learning/compression \cite{DBLP:journals/pacmpl/CaoKNWTP23}. 
Investigations have considered AU for several mathematical and computational frameworks such as term-graphs \cite{DBLP:conf/rta/BaumgartnerKLV18}, higher-order variants \cite{DBLP:conf/ausai/KrumnackSGK07,DBLP:journals/jar/BaumgartnerKLV17},  unranked languages (in which function symbols have variable arity) \cite{DBLP:journals/jar/KutsiaLV14,DBLP:journals/iandc/BaumgartnerK17}, nominal terms \cite{DBLP:conf/rta/BaumgartnerKLV15,DBLP:conf/fscd/Schmidt-Schauss22,DBLP:conf/cade/SchmidtSchaussN23}, approximate AU \cite{DBLP:conf/tbillc/KutsiaP19,DBLP:conf/cade/KutsiaP22,AITKACI20201},  and first-order equational AU, which is also the subject of this paper. 

In their works, Plotkin and Reynolds introduced syntactic AU algorithms for computing least general generalizations (lggs). In the equational case, the given terms do not necessarily have a single lgg; thus, problems are instead characterized by their minimal and complete sets of generalizations (mcsg's), which leads to the classification of theories depending on the existence and cardinality of such sets: If the mcsg does not exist for some problem in the given theory, then the theory has the nullary AU type. Otherwise, theories may have unitary (all problems have a singleton mcsg, i.e., a single lgg), finitary (all problems have a finite mcsg, at least one of which is not a singleton), or infinitary (there is a problem with the infinite mcsg) AU type. 

There have been quite a few developments concerned with AU modulo equational theories. For example, Burghardt~\cite{DBLP:journals/ai/Burghardt05} considered anti-unification modulo an arbitrary equational theory using grammars. 
Most other authors studied AU over fundamental algebraic properties and their combinations, e.g.,  associative $(A)$, commutative $(C)$, $AC$, idempotent  $(I)$ operators, or operators with unit ($U$) elements (e.g., $f(x,e)$). An early work by Baader \cite{DBLP:conf/rta/Baader91} studied AU over so-called ``commutative theories'', covering commutative monoids $(ACU)$, commutative idempotent monoids $(ACUI)$, and Abelian groups. In a restricted setting, he showed that AU in such theories is unitary. Alpuente et al.~\cite{DBLP:journals/iandc/AlpuenteEEM14,DBLP:journals/amai/AlpuenteEMS22} studied AU over combinations of $A$, $C$, and $U$ operators in an order-sorted setting, providing complete AU algorithms, and proved that all studied AU problems are of type finitary. A further investigation by Cerna and Kutsia~\cite{DBLP:conf/fscd/CernaK20} showed that some results depend on the number of symbols that satisfy
the associated equational axioms. For instance, they proved the nullarity of theories containing more than one equational symbol: $U^{>1}, (AU)^{>1} (CU)^{>1}, (ACU)^{1}$, and  $(AU)(CU)$. Also, these authors showed that  $I, AI$, and $CI$ are of type infinitary \cite{DBLP:journals/tocl/CernaK20}, and Cerna proved that $(UI)^{>1}, (AUI)^{>1}, (CUI)^{>1},(ACUI)^{>1}$, and semirings are of type nullary~\cite{DBLP:journals/tcs/Cerna20}. 

This paper extends the state-of-the-art on equational anti-unification by providing an algorithm to solve AU problems in a first-order syntax that includes operators with collapsing symbols, i.e., symbols that are associated with an absorption constant such that  $f(\varepsilon_f,x)\approx \varepsilon_f \approx f(x, \varepsilon_f)$.   Such collapsing properties often appear in syntactic, logical, and algebraic frameworks (e.g., $0 \times x \approx 0, \mbox{false} \wedge p \approx \mbox{false}$). They are an instance of so-called subterm-collapsing equational theories. Concerning software development and programming languages, one could consider such operations as modeling exception handling and other methods of flagging errors where much of the context is discarded when the error handling code is triggered. In such cases, like absorption theories, the state prior to triggering the error handling code is not precisely captured by the resulting context and, in a sense, can be abstracted away.

The main results presented in this paper are (i) a terminating algorithm for anti-unification over absorption theories (Section \ref{sec:RulesAndTermination}), (ii) proofs of soundness and completeness (Section \ref{sec:SoundnessAndCompleteness}), (iii) a proof that anti-unification over absorption theories is of type infinitary (Section~\ref{sec:AUtype}),
(iv) a finitary representation of a potentially infinite set of solutions (Section \ref{sec:RulesAndTermination}), and (v) a brief analysis of a finitary linear variant (Section~\ref{sect:linear}).
\section{Preliminaries}
Let  $\mathcal{V}$ be a countable set of variables and $\mathcal{F}$ a set of function symbols each associated with an arity. Additionally, we assume $\mathcal{F}$ contains a special constant $\star$, referred to as the \textit{wild card}. The set of terms derived from the sets mentioned above is denoted by  $\mathcal{T}(\mathcal{F},\mathcal{V})$, whose members are constructed using the grammar  $t ::= x \mid f(t_1,\dots,t_n)$, where $x\in\mathcal{V}$ and $f\in \mathcal{F}$ with arity $n\geq 0$. When $n=0$, $f$ is called a {\it constant}. Constant and function symbols, terms, and variables are denoted by lower-case letters of the first, second, third, and fourth quarter of the alphabet ($a,b,\ldots$;  $f,g,\ldots$; $r,s,\ldots$; $x,y,\ldots$). The set of variables occurring in $t$ is denoted by $\mathcal{V}(t)$. The {\em length} of a term is defined inductively as: $len(x)=1$, and $\textstyle len(f(t_1,\ldots,t_n)) = 1 + \sum_{i=1}^{n} len(t_i)$. 

The set of {\it positions} of a term $t$ is a set of strings over the positive integers defined in the standard way. 
The {\em head} of a term $t$ is defined as $head(x)=x$ and $head(f(t_1,\dots,t_n))=f$, for $n\geq 0$.  

A {\it substitution} is a function $\sigma: \mathcal{V} \to \mathcal{T(\mathcal{F},\mathcal{V})}$ such that $\sigma(x)\neq  x$ for only finitely many variables. The set of the variables that are not mapped to themselves is called the \emph{domain} of $\sigma$, denoted as $\Dom(\sigma)$. The \emph{range} of $\sigma$, denoted $Ran(\sigma)$, is the set of terms $\{\sigma(x) \mid x\in \Dom(\sigma) \}$. We refer to a substitution $\sigma$ as \emph{ground} if for all $t\in Ran(\sigma)$, $\mathcal{V}(t)=\emptyset$. Substitutions are extended to terms in the usual manner. We use the postfix notation for substitution application to terms and write $t\sigma$ instead of $\sigma(t)$.

Substitutions can be described as sets of {\em bindings} of variables in their domains into terms in their ranges, e.g.,  we represent a substitution $\sigma$  as the set $\{x\mapsto x\sigma \mid x\in  \Dom(\sigma) \}$.
Lower-case Greek letters denote substitutions except for the identity substitution that we denote by $id$. The set of variables occurring in $Ran(\sigma)$ is denoted as $\RVar(\sigma)$. The {\it 
composition} of substitutions $\sigma$ and $\rho$ is written  $\sigma\rho$ and 
is associative, i.e. $x(\sigma\rho)=(x\sigma)\rho$ for each $x\in \mathcal{V}$. 
The \emph{restriction of a substitution} $\sigma$ to a set of variables $V$, denoted by $\sigma\vert_{V}$, is a substitution defined as $\sigma\vert_{V}(x)=\sigma(x)$ for all $x\in V$ and $\sigma\vert_{V}(x)=x$ otherwise.

In this work, we focus on equational anti-unification. Thus, we refrain from presenting syntactic variants of the concepts discussed below. For such details, we refer to the recent survey on the topic~\cite{DBLP:conf/ijcai/CernaK23}.

\begin{definition}[Equational Theory \cite{DBLP:journals/jsc/Siekmann89}]
\label{def:equtheo}
An equational theory $T_E$ is a class of algebraic structures that hold a set of equational axioms $E$ over a  set of terms $\mathcal{T(\mathcal{F},\mathcal{V})}$.  
\end{definition}
The relation $\{ (s,t) \in \mathcal{T(\mathcal{F},\mathcal{V})} \times \mathcal{T(\mathcal{F},\mathcal{V})} \; | \;  E\models (s,t)\}$ induced by a set of equalities $E$ gives the set of equalities satisfied by all structures in the theory of $E$. We will use the notation $s\approx_E t$ for $(s,t)$ belonging to this set. Also, we will identify $T_E$ with the set of axioms $E$. Groups, monoids, and semirings are examples of equational theories.  

\begin{definition}[$E$-generalization, $\genOf{E}$, solution] 
\label{def:egen}
The generalization relation of the theory induced by $E$ holds for terms $r,s\in \mathcal{T(\mathcal{F},\mathcal{V})}$,  written $r\genOf{E} s$, if there exists a substitution $\sigma$ such that $r\sigma\approxOf{E} s$. In this case,  we say that $r$ is {\it more general} than $s$ modulo $E$. 
If $r\genOf{E}  s$ and $r\genOf{E}  t$, we say that $r$ is an $E$-generalization of $s$ and $t$. The set of all $E$-generalizations of $s$ and $t$ is denoted as $\mathcal{G}_{E}(s,t)$. By $\prec_E$ and $\simeq_E$ we denote the strict and equivalence relations induced by $\genOf{E}$. 

Each $r\in \mathcal{G}_{E}(s,t)$ is associated with two substitutions $\sigma$ and $\rho$ such that $r\sigma\approxOf{E} s$ and $r\rho\approxOf{E} t$.  The triple $\langle r, \sigma,\rho\rangle$ is the $E$-{\it solution} of $s$ and $t$ associated to the $E$-generalization $r$. 

\end{definition}





\begin{example}
\label{ex:lgg}
Consider the equational theory $\Abs=\{f(\varepsilon_f,x)\approx \varepsilon_f, f(x,\varepsilon_f)\approx\varepsilon_f\}$, and the terms $s=\varepsilon_f$ and $t=f(f(b,c),a)$. 
Then $f(f(b,x),a)$ is \Abs-generalization of $s$ and $t$.  Indeed, $\sigma=\{x\mapsto \varepsilon_f\}$ and $\rho=\{x\mapsto c\}$ satisfy $f(f(b,x),a)\sigma=f(f(b,\varepsilon_f),a)\approx_{\Abs}\varepsilon_f$ and $f(f(b,x),a)\rho=f(f(b,c),a)$. Hence, $\langle f(f(b,x),a), \sigma, \rho\rangle$ is an \Abs-solution to the problem of generalizing $s$ and $t$.

\end{example}

\begin{definition}[Minimal complete set of $E$-generalizations]
\label{def:mcsg}
The \emph{minimal complete set of $E$-generalizations} of the terms $s$ and $t$, denoted as $\mcsg_E(s,t)$,  is a subset of $\mathcal{G}_{E}(s,t)$ satisfying:
\begin{enumerate}
\item For each $r\in \mathcal{G}_{E}(s,t)$ there exists $r'\in \mcsg_E(s,t)$ such that $r\genOf{E} r'$.
\item If $r,r'\in \mcsg_E(s,t)$ and $r\genOf{E} r'$, then $r=r'$ 
(Minimality).
\end{enumerate}
\end{definition}

\begin{example}
\label{ex:mcsg}
For Example \ref{ex:lgg}, the minimal complete set of $\Abs$-generalizations is
\[\mcsgabs(\varepsilon_f,f(f(b,c),a))=\{f(f(x,c),a),f(f(b,x),a),f(f(b,c),x)\}.\]
\end{example}

\begin{definition}[Anti-unification type]
\label{def:typemcsg}
The anti-unification type of an equational theory $E$ may have one of the following forms: \begin{itemize}
\item \textit{Unitary}:  $\mcsg_{E}(s,t)$ exists for all  $s,t\in \mathcal{T(\mathcal{F},\mathcal{V})}$ and is always singleton. 
\item \textit{Finitary}: $\mcsg_{E}(s,t)$ exists and is finite, for all  $s,t\in \mathcal{T(\mathcal{F},\mathcal{V})}$, and there exist  $s',t'\in  \mathcal{T(\mathcal{F},\mathcal{V})}$ for which  $1 < |\mcsg_{E}(s',t')| < \infty$. 
\item \textit{Infinitary}: $\mcsg_{E}(s,t)$ exists for all  $s,t\in  \mathcal{T(\mathcal{F},\mathcal{V})}$, and there exist $s',t'\in  \mathcal{T(\mathcal{F},\mathcal{V})}$ such that  $\mcsg_{E}(s',t')$  is infinite. 
\item \textit{Nullary}: for some  $s,t\in   \mathcal{T(\mathcal{F},\mathcal{V})}$, $\mcsg_{E}(s,t)$ does not exist. 
\end{itemize}
\end{definition}

\begin{example}  From the introduction:
    Syntactic AU  is \textit{unitary}~\cite{Plotkin70,Reynolds70}, AU over associative (A) and commutative (C)  theories is \textit{finitary}~\cite{DBLP:journals/iandc/AlpuenteEEM14},  AU over idempotent theories is \textit{infinitary}~\cite{DBLP:journals/tcs/Cerna20}, and AU with multiple unital equations is \textit{nullary}~\cite{DBLP:conf/fscd/CernaK20}.
\end{example}

 \section{Anti-Unification in Absorption Theories}
\label{sec:RulesAndTermination}

Absorption is one of the fundamental algebraic properties used to define various algebraic structures. For example,  semirings, rings, and Boolean algebras define their multiplicative operation as absorption with respect to the additive identity. Concrete examples are the product operation and $0$ in number fields and the intersection operation and $\emptyset$ in set theory. So far, investigations on anti-unification over absorption theories have only considered equational theories defining more elaborate algebraic structures (\textit{semirings}~\cite{DBLP:journals/tcs/Cerna20}). In this work, we study pure absorption theories as part of a general study on the anti-unification of subterm-collapsing theories. 

Given a binary function symbol $f$ and a constant $\varepsilon_f$, the absorption property is given by the axioms $\Abs(f,\varepsilon_f)= \{f(x,\varepsilon_f)\approx\varepsilon_f,  f(\varepsilon_f,x)\approx\varepsilon_f\}$. An absorption theory is a finite union of $\Abs(f_1,\varepsilon_{f_1}), \ldots,\Abs(f_n,\varepsilon_{f_n})$ such that the chosen symbols are pairwise distinct. We denote such theories by $\Abs$.

We refer to symbols contained in the equations of $\Abs(f,\varepsilon_f)$ \emph{related absorption symbols}, and  $\varepsilon_f$ as the \emph{absorption constant of $f$}.  

For the rest of this paper, we will only consider anti-unification problems over $\Abs$ theories, i.e., purely absorption equational theories. An \emph{anti-unification equation} (AUE) is a triple of the form $s\labTriEq{x} t$, where $x\in \mathcal{V}$, called the \emph{label of the AUE}, and $s,t\in \mathcal{T(\mathcal{F},\mathcal{V})}$. Given a set $A$ of  AUEs,  $labels(A) = \{ x\mid s\labTriEq{x} t\in A\}$. A set of AUEs is \emph{valid} if its labels are pairwise disjoint. We extend the notion of $length$ to AUEs and sets of AUEs as the sum of the lengths of the terms in the AUEs.
The \emph{wild card} plays an integral role in our procedure for computing generalizations. In particular, an AUE is referred to as \emph{wild} if either the left or right side is the wild card. The procedure aims to compute a set of terms generalizing the input AUE and a set of \emph{solved} AUEs from which we can compute how such terms generalize the input AUE.

\begin{definition}[Solved AUE]
An AUE $s\labTriEq{x} t$ is \emph{solved} over an absorption theory $\Abs$ if $head(s)$ $\not = head(t)$, $head(s)$ and $head(t)$ are not related absorption symbols, and $s\labTriEq{x} t$ is not wild.
\end{definition}

\subsection{Generalization Procedure for \Abs\ Theories}
We now introduce a procedure for solving the anti-unification problems over \Abs\ theories. It is presented as the set of inference rules \AU\ in Table \ref{tab:generalizationrules}, which transform quadruples, called \textit{configurations}, defined below.

\begin{definition}[Configuration]
\label{def:configuration} A configuration is a quadruple of the form $ \langle A;S;T;\theta\rangle$, where:
\begin{itemize}
    \item $A$ is the  valid set of {\em unsolved} AUEs;
    \item $S$ is the {\em store}, the valid set of {\em solved} AUEs;  
    \item $T$ is the {\em abstraction}, the valid set of {\em wild} AUEs;
    \item $\theta$ is a {\em substitution} mapping the labels to their respective generalizations.
\end{itemize}
All terms contained in a configuration are in their $\Abs$-normal forms: an absorption constant does not occur as the argument to its absorption symbol. Configurations
satisfy the following properties:
  
\begin{enumerate}[label=(\roman*)]\label{list:propconfig}
 \item The sets $\labels{A},\labels{S},\labels{T}$ and $\Dom(\theta)$ are pairwise disjoint.
 \item $\RVar(\theta)=\labels{A}\cup \labels{S}\cup \labels{T}$. 
\end{enumerate}
\end{definition}

 
 The rules in the Table~\ref{tab:generalizationrules} will be referred to as follows:  Decompose (\ddec), Solve (\dsolve),  Expansions for Left Absorption, (\dexpalf and \dexpals), Expansions for Right Absorption (\dexparf and \dexpars), and Merge (\dmer). 

\begin{table}
\centering
\caption{Generalization \AU\ rules for \Abs\ theory.}\label{tab:generalizationrules}
\begin{tabular}{|@{\hspace{2mm}}l@{\hspace{3mm}}c@{\hspace{2mm}}|}
\hline
& \\[0mm] 
\textbf{(\ddec)}
&
$\displaystyle\frac{\langle \{f(s_1,\dots,s_n)\labTriEq{x} f(t_1,\dots,t_n)\}\cupdot A;S;T;\theta\rangle}{\langle \{s_1\labTriEq{y_1} t_1, \dots, s_n\labTriEq{y_n} t_n\}\cup A; S;T; \theta\{x\mapsto f(y_1,\dots,y_n)\}\rangle}$\\
& where $f$ is an $n$-ary symbol, $n\geq 0$, and $y_1,\dots,y_n$ are fresh variables.\\[1mm]
\textbf{(\dsolve)}&
$\displaystyle\frac{\langle\{s\labTriEq{x} t\}\cupdot A;S;T;\theta\rangle}{\langle A; \{s\labTriEq{x} t\}\cup S; T;\theta\rangle}$\\
&where $head(s)\neq head(t)$ and they are not related absorption symbols. \\[1mm]
\textbf{(\dexpalf)}&
$\displaystyle\frac{\langle\{\varepsilon_f\labTriEq{x} f(t_1,t_2)
\}\cupdot A;S; T;\theta\rangle}{\langle\{\varepsilon_f\labTriEq{y_1} t_1
\}\cup A; S;\{\star\labTriEq{y_2} t_2\}\cup T; \theta\{x\mapsto 
f(y_1,y_2)\}\rangle}$\\
&where $f$ is an absorption  symbol,  and $y_1,y_2$ are fresh variables.\\[1mm]
\textbf{(\dexpals)}&
$\displaystyle\frac{\langle\{\varepsilon_f\labTriEq{x} f(t_1,t_2)
\}\cupdot A;S;T;\theta\rangle}{\langle\{\varepsilon_f\labTriEq{y_2} t_2
\}\cup A; S;\{\star\labTriEq{y_1} t_1\}\cup T; \theta\{x\mapsto 
f(y_1,y_2)\}\rangle}$\\
 & where $f$ is an absorption symbol,  and $y_1,y_2$ are fresh variables.\\[1mm]
\textbf{(\dexparf)}&
$\displaystyle\frac{\langle\{f(s_1,s_2)\labTriEq{x} \varepsilon_f\}\cupdot A;S; T;\theta\rangle}{\langle\{s_1\labTriEq{y_1} \varepsilon_f\}\cup A; S;\{s_2\labTriEq{y_2} \star\}\cup T; \theta\{x\mapsto f(y_1,y_2)\}\rangle}$\\
&where $f$ is an absorption  symbol,  and $y_1,y_2$ are fresh variables.\\[1mm]
\textbf{(\dexpars)}&
$\displaystyle\frac{\langle\{f(s_1,s_2)\labTriEq{x} \varepsilon_f\}\cupdot A;S;T;\theta\rangle}{\langle\{s_2\labTriEq{y_2} \varepsilon_f\}\cup A; S;\{s_1\labTriEq{y_1} \star\}\cup T; \theta\{x\mapsto f(y_1,y_2)\}\rangle}$\\
 & where $f$ is an absorption symbol,  and $y_1,y_2$ are fresh variables.\\[1mm]
\textbf{(\dmer)} &
$\displaystyle\frac{\langle \emptyset;\{s\labTriEq{x} t, s\labTriEq{y} t\}\cup S; T;\theta\rangle}{\langle \emptyset;\{s\labTriEq{y} t\}\cup S;T; \theta\{x\mapsto y\}\rangle}$\\[3mm]
\hline
\end{tabular}
\end{table}

\begin{lemma}[Configuration preservation]
\label{lem:ConfPreserv}
Any quadruple $\mathcal{C'}=\langle A';S';T';\theta'\rangle$ derived from a configuration $\mathcal{C}=\langle A;S;T;\theta\rangle$ is a configuration.
\end{lemma}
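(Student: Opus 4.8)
The plan is to prove Lemma~\ref{lem:ConfPreserv} by a case analysis over the seven inference rules of Table~\ref{tab:generalizationrules}. For each rule I would assume that the premise $\mathcal{C}=\langle A;S;T;\theta\rangle$ is a genuine configuration---so its four components are valid sets of the appropriate kind (unsolved, solved, wild AUEs), all terms are in $\Abs$-normal form, and the two properties of Definition~\ref{def:configuration} hold---and then verify that the conclusion $\mathcal{C}'=\langle A';S';T';\theta'\rangle$ inherits all of these. Concretely, for each rule there are four obligations: (a) $A'$, $S'$, $T'$ remain \emph{valid} (pairwise-disjoint labels within each set) and each AUE lands in the correct component (e.g.\ that every AUE moved into $S'$ is indeed solved per the definition, and every AUE moved into $T'$ is indeed wild); (b) all terms in $\mathcal{C}'$ are still in $\Abs$-normal form; (c) property~(i), that $\labels{A'},\labels{S'},\labels{T'}$ and $\Dom(\theta')$ are pairwise disjoint; and (d) property~(ii), that $\RVar(\theta')=\labels{A'}\cup\labels{S'}\cup\labels{T'}$.

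First I would dispatch the bookkeeping obligations (c) and (d), since these follow a uniform pattern driven by the \emph{freshness} of the newly introduced variables. In the expansion and decompose rules, the label $x$ is removed from $\labels{A}$ and from $\Dom(\theta')$'s range via the binding $\{x\mapsto f(y_1,\dots,y_n)\}$ (or $f(y_1,y_2)$), while fresh labels $y_i$ are added to the component(s) that receive the new AUEs. Freshness guarantees the $y_i$ do not collide with any existing labels or with $\Dom(\theta)$, so pairwise disjointness of property~(i) is preserved. For property~(ii), I would observe that composing $\theta$ with $\{x\mapsto f(y_1,\dots,y_n)\}$ replaces the single range-variable $x$ by exactly the $y_i$, so $\RVar(\theta')=(\RVar(\theta)\setminus\{x\})\cup\{y_1,\dots,y_n\}$, which matches the updated union of the three label sets. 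For $(\dmer)$ the binding is $\{x\mapsto y\}$, which simply identifies two labels already present and removes $x$ from the right-hand store copy, so $\RVar(\theta')=\RVar(\theta)\setminus\{x\}$ and the label of $S'$ shrinks by $x$ correspondingly; $A=\emptyset$ there makes this clean.

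For obligation (a) I would check the side conditions of each rule against the definitions: in $(\dsolve)$ the side condition $head(s)\neq head(t)$ with $s,t$ not related absorption symbols and the AUE not wild is exactly the definition of a solved AUE, so $s\labTriEq{x}t$ legitimately joins $S$; in the four expansion rules the AUE placed in $T'$ has a wild-card side ($\star$ or $\varepsilon_f$ paired appropriately), so it is genuinely wild. Validity within each set follows because the removed label is deleted and only fresh labels are inserted. Obligation (b), $\Abs$-normal form, is the one place where a little care is genuinely needed and is the step I expect to be the main obstacle: I must argue that the rules never create a term in which an absorption constant $\varepsilon_f$ appears directly as an argument to its own symbol $f$. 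The decompose and merge rules only \emph{break terms apart} or rename labels, so they cannot introduce such a configuration if it was absent before. The delicate cases are the expansion rules, which produce subterms like $\varepsilon_f\labTriEq{y_1}t_1$: since the parent term $f(t_1,t_2)$ was already in $\Abs$-normal form, neither $t_1$ nor $t_2$ equals $\varepsilon_f$, so no new non-normal pattern is created; the $\varepsilon_f$ and the $\star$ simply stand alone on their respective sides of the new AUEs rather than nested under $f$. I would make this explicit to close the proof.
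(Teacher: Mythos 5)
Your proposal is correct and follows essentially the same route as the paper's proof: a rule-by-rule case analysis, using freshness of the new labels to preserve property~(i) and tracking how each binding $\{x\mapsto f(y_1,\dots,y_n)\}$ (or $\{x\mapsto y\}$) updates $\RVar(\theta)$ to preserve property~(ii). In fact you are somewhat more thorough than the paper, which checks only properties~(i) and~(ii) and leaves the validity, component-placement, and $\Abs$-normal-form obligations implicit; your explicit argument that the expansion rules cannot break normal forms (because $t_1,t_2$ are subterms of a term already in normal form) is a sound addition, not a divergence.
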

\begin{proof}
    We analyze each possible rule application.

\begin{itemize}
    \item Rule \rdec\ includes fresh variables as new labels in the AUEs in $\mathcal{C}'$ and adds to the domain of the substitution the label of the  AUE eliminated from the AUEs in $\mathcal{C}$. This transformation does not violate condition (i) as all variables introduced into the range of the substitution are fresh. From this, we can deduce that condition (ii) also holds.
    
    \item Rule \rsolve\ moves an AUE from the unsolved AUEs in $\mathcal{C}$ to the store of $\mathcal{C}'$. Thus, conditions (i) and (ii) trivially hold in $\mathcal{C}'$ as the application of this rule does not introduce new AUEs.

    \item Rules \rexpalf, \rexpals, \rexpalf, and \rexpars\ include one fresh variable as a new label in the unsolved AUEs in $\mathcal{C}'$ and one fresh variable as a new label in the abstraction part, and add to the domain of the substitution the label of the  AUE eliminated from the AUEs in $\mathcal{C}$. Thus, condition (i) holds for $\mathcal{C}'$. Additionally, the new variables occur in the range of the substitution; thus, condition (ii) holds. 

    \item Rule \rmer\ eliminates a repeated AUE from the store of $\mathcal{C}$ adding its label to the domain of the substitution in $\mathcal{C}'$. Thus,  conditions (i) and (ii) hold. \qed
\end{itemize}
\end{proof}

An initial configuration is a configuration of the form $\langle A; \emptyset;\emptyset; \iota\rangle$, where the \textit{starting substitution} $\iota$ is given by a set of bindings $\{x_{st}\mapsto x\mid x\in \labels{A}\}$, each $x_{st}$ being a distinguished starting label. A \textit{normal configuration} is a configuration that is not reducible using the rules of \AU. A normal configuration resulting from a finite, exhaustive \AU\ derivation is called a \textit{final configuration}.   We denote the set of final configurations finitely derived from a configuration $\langle A; S; T ; \theta\rangle$ as $\AU(\langle A; S; T ; \theta\rangle)$.

\begin{theorem}[Termination]
\label{Thm:Ter}
$\AU$ cannot result in an infinite derivation.
Also, for a configuration $\mathcal{C}$, the set of final configurations $\AU(\mathcal{C})$ is computable in a finite number of steps. 
\end{theorem}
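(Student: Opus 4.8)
The plan is to prove termination by exhibiting a complexity measure on configurations that strictly decreases, in a well-founded order, under every rule application. A natural candidate is the lexicographic pair $\mu(\mathcal{C}) = (len(A), |S|)$, where $len(A)$ is the total length of the unsolved AUEs and $|S|$ is the number of solved AUEs in the store; these values live in $\mathbb{N}\times\mathbb{N}$ ordered lexicographically, which is well-founded. The key structural observation that makes this measure work is that the abstraction $T$ is \emph{inert}: no rule ever reads an AUE from $T$ (every rule's redex is an AUE of $A$, except \dmer, which reads $S$), so the unbounded growth of $T$, and the growth of $S$ under \dsolve, is irrelevant to termination and need not be tracked.

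First I would verify that the six ``processing'' rules all strictly decrease the first component $len(A)$. For \ddec, replacing $f(s_1,\dots,s_n) \labTriEq{x} f(t_1,\dots,t_n)$ by its $n$ argument-wise AUEs drops $len(A)$ by exactly $2$ (the two discarded head symbols), uniformly including the constant case $n=0$. For \dsolve, the selected AUE simply leaves $A$, so $len(A)$ drops by at least $2$. For each of the four expansion rules, $\varepsilon_f$ is retained against a strictly smaller argument while the sibling subterm is pushed into the inert $T$; e.g.\ \dexpalf\ turns $\varepsilon_f \labTriEq{x} f(t_1,t_2)$ into $\varepsilon_f \labTriEq{y_1} t_1$ in $A$, lowering $len(A)$ by $1+len(t_2)\ge 2$. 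Since all of these strictly decrease $len(A)$, any increase in $|S|$ caused by \dsolve\ is harmless under the lexicographic order.

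The only remaining rule is \dmer, which is precisely the reason the second component is needed. It fires exactly when $A=\emptyset$, so it leaves $len(A)=0$ unchanged while removing one duplicate AUE from $S$, strictly decreasing $|S|$; note also that its conclusion keeps $A$ empty, so once $A$ is exhausted no rule can repopulate it. Hence $\mu$ strictly decreases under every rule, and well-foundedness of the lexicographic order rules out any infinite derivation, establishing the first claim.

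For the second claim, I would observe that the derivation tree rooted at $\mathcal{C}$ is \emph{finitely branching}: the only sources of nondeterminism are the choice of which AUE of $A$ to process (finitely many), together with the at-most-two applicable expansion variants for a given redex, and, when $A=\emptyset$, the choice of which duplicate in $S$ to merge (finitely many); fresh variables are fixed by a canonical naming convention so they introduce no branching. Combining finite branching with the termination of every path, König's lemma yields a finite derivation tree, so the set of leaves $\AU(\mathcal{C})$ is finite and can be enumerated by exhaustively applying the rules. The main obstacle to watch is exactly the interaction flagged above, namely that $S$ and $T$ grow during a run, which a single ``total size of the configuration'' measure would mishandle; the resolution is to confine the measure to $A$ and use $|S|$ solely as a tie-breaker for the $A=\emptyset$ phase governed by \dmer.
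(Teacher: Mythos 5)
Your proof is correct and takes essentially the same approach as the paper's: the paper likewise argues that every rule except \rmer\ strictly decreases the length of the unsolved component, that \rmer\ fires only when the unsolved component is empty and strictly shrinks the store, and it also combines finite branching with K\"onig's Lemma for the second claim. Your lexicographic measure $(len(A),|S|)$ and the rule-by-rule accounting simply make the paper's two-phase argument explicit.
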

\begin{proof} 
After each rule application, except for  \rmer, the length of the unsolved component strictly decreases. Therefore, each possible derivation leaves a configuration with an empty unsolved component. The merge rule, \rmer, can only be applied when the unsolved component is empty, and it strictly decreases the number of AUEs in the store. Therefore, the procedure terminates.

Furthermore, since for any configuration, there are only finitely many ways to apply the \AU\ rules in Table~\ref{tab:generalizationrules} to a particular AUE in the unsolved component (finite branching),  we can use König's Lemma to conclude that the resulting set of final configurations is indeed finite and finitely computable.  \hfill $\Box$
\end{proof}



To give a brief description of the next steps we follow, assume an initial configuration $\langle A; \emptyset;\emptyset; \iota\rangle$ that leads to a final one $\langle \emptyset, S, T, \theta\rangle \in \AU(\langle A; \emptyset;\emptyset; \iota \rangle)$.  When we apply $\theta$ to initial labels in $A$, it results in terms describing the common structure of pairs of terms in the initial AUE. That is,  given the initial $s \labTriEq{x} t\in A$, we will prove that $x\theta \in \mathcal{G}_{\Abs}(s,t)$. Additionally, the final store $S$ is used to construct the substitutions of the $\Abs$-solutions, $\langle x\theta, \sigma,\rho\rangle$ such that $x\theta\sigma \approxOf{\Abs}  s$ and $x\theta\rho\approxOf{\Abs}  s$. The construction of these substitutions is formally described in Definition \ref{def:subsaue}.  
Finally, we use the final abstraction set $T$ to calculate generalizations derived from further expansions of terms using the absorption theory $\Abs$, showing that it produces $\mcsgabs(s,t)$.
\begin{definition}[Left and right substitutions]
\label{def:subsaue}Let $W$ be a finite valid set of AUEs. The left and right substitutions related to $W$ are defined as follows: $\sigma_{W} = \{y\mapsto s \;| \;s \labTriEq{y} t \in W\}$ and $ \rho_{W} = \{y\mapsto t \;|\; s\labTriEq{y} t \in W\}$.
\end{definition}
Since the algorithm is terminating and finitely branching, defining a finite set of computed generalizations is possible.

\begin{definition}[Computed solutions]
\label{def:comsol} Let  $\mathcal{D}=\langle A; S; T; \theta\rangle \Longrightarrow^{*} \langle \emptyset; S';T';\theta'\rangle$ be a derivation to a final configuration.
The computed generalization for the AUE $s\labTriEq{x}t\in A\cup S\cup T$ is defined as 
$\langle \{x\theta'\}_{x\in L}, \sigma_{\cal D}, \rho_{\cal D} \rangle$, where $\sigma_{\cal D} = \sigma_{S'\cup T'}$, $\rho_{\cal D} = \rho_{S'\cup T'}$, and $L=\labels{A}\cup \labels{S}\cup \labels{T}$. 

\end{definition}
\subsection{Abstraction Computation}
In this section, we construct the \textit{abstraction set} and substitutions from the store $S$ and abstractions $T$ computed by 
$\AU$. Let $\langle \{s\labTriEq{x} t\};\emptyset ; \emptyset; \iota\rangle$ be an initial configuration and 
$\langle \emptyset; S ; T; \theta \rangle\in \AU(\langle \{s\labTriEq{x} t\};\emptyset ; \emptyset; \iota\rangle )$. While $x\theta$ may be more specific than the 
syntactic generalization of $s$ and $t$, any use of the absorption theory while computing $x\theta$ is completely dependent on the presence 
of absorption symbols and constants within $s$ and $t$. It is not enough to capture all more specific generalizations than the syntactic 
generalization, as absorption allows for introducing additional structure beyond what is present in the initial AUE. For example, 
$\AU$ computes the generalization $f(x,y)$ for $\varepsilon_f$ and $ f(h(\varepsilon_f),h(h(\varepsilon_f)))$, yet $\Abs$ tells us that a 
more specific generalization, $f(x,h(x))$, also generalizes these terms. In more extreme cases, infinitely many more specific generalizations 
may exist.

\begin{definition}[Abstraction set]
\label{def:abstraction}
Let  $t$ be a term in \Abs-normal form, and $\sigma$ be a substitution whose range is in \Abs-normal form. The abstraction set of $t$ with respect to $\sigma$ is the set\vspace{-1mm}
\[\Abst(t,\sigma):=\{r \mid r\sigma \approxOf{\Abs} t,\  \mbox{$r$ is in an \Abs-normal form, and } \mathcal{V}(r)\subseteq \Dom(\sigma)\}.\]
\end{definition}

In words, $\Abst(t,\sigma)$ is the set of all those $\Abs$-generalizations of $t$ whose $\sigma$-instance is $t$, and they may contain only variables from $\Dom(\sigma)$. (Obviously, $t$ itself is in $\Abst(t,\sigma)$.) To obtain such an $r\in \Abst(t,\sigma)$, we can abstract some occurrences of some $x\sigma$'s in $t$ by $x$, where $x\in \Dom(\sigma)$; this is the origin of the term ``abstraction set''.
\begin{example} \label{ex:abstraction1}
Let $t=g(\varepsilon_f,f(h(a),b))$ and $\sigma=\{x\mapsto a, y\mapsto f(h(a),b),z\mapsto b \}$. Then the abstraction set of $t$ with respect to $\sigma$:

\noindent$\Abst(t,\sigma)=\{ t,\ g(\varepsilon_f,y),\ g(\varepsilon_f,f(h(x),b)),\ g(\varepsilon_f,f(h(a),z)), \ g(\varepsilon_f,f(h(x),z$ $))\}.$
Now, consider  $t=h(\varepsilon_f)$ and $\sigma=\{y\mapsto a, v\mapsto \varepsilon_f\}$. Then the abstraction set of $t$ with respect to $\sigma$:

\noindent$\Abst(t,\sigma)=\{h(\varepsilon_f),h(v)$ $,h(f(v,b)),h(f(a,v)),h(f(v,v)), h(f(v,f(y,a))), \dots\}.$ 

The latter $\Abst(t,\sigma)$ has infinitely many terms with variables from $\Dom(\sigma)$.
\end{example}

Given a configuration $\langle A;S;T;\theta\rangle$, the AUEs contained in $T$ are of the form $\star\triangleq_x t$ or $t\triangleq_x \star$ for some $t$. The labels occurring in $T$ also occur in the terms of the range of $\theta$. Here, we should interpret $\star$ as any term. Essentially, the abstraction substitution defined below extends $\theta$ by replacing the labels of $T$ with a generalization of the non-wildcard term of the associated AUE and some arbitrary term. While this is sufficient for constructing more specific generalizations, we consider restricting the variables occurring in the introduced terms. 

\begin{definition}[Abstraction substitutions]
\label{def:subsabs}
Let $\langle A;S;T;\theta\rangle$ be a configuration such that $T\neq \emptyset$. Then an \emph{abstraction substitution} of this configuration is any substitution $\tau$ such that
\begin{itemize}
    \item $\Dom(\tau) = \labels T$, and
    \item for each $y\in \Dom(\tau)$ we have $y\tau \in \Abst_{y}(T,S)$,
\end{itemize}
%
%
\noindent where the set of terms $\Abst_{y}(T,S)$ is defined as $\Abst_{y}(T,S)=\Abst(t,\rho_{S})$ if $\star\labTriEq{y} t \in T$ and $\Abst_{y}(T,S)=\Abst(s,\sigma_{S})$ if $s\labTriEq{y} \star \in T$. We denote the set of all possible abstraction substitutions of this configuration as $\Psi(T,S)$.
\end{definition}

\begin{corollary}
\label{cor:rangebnd}
Let $\langle A;S;T;\theta\rangle$ be a configuration such that $T\neq \emptyset$. Then for any $y\in labels(T)$ and $\tau\in \Psi(T,S)$, $\mathcal{V}(y\tau) \subseteq  labels(S)$.
\end{corollary}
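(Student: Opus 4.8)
The plan is to prove this by directly unfolding the definitions of $\Psi(T,S)$ and of the abstraction set $\Abst(\cdot,\cdot)$, since the variable-containment constraint we need is already baked into the latter. First I would fix an arbitrary $y \in \labels{T}$ and an arbitrary $\tau \in \Psi(T,S)$. By the definition of abstraction substitutions, $y\tau \in \Abst_y(T,S)$, so the argument naturally splits according to the two shapes a wild AUE carrying the label $y$ can take.

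In the first case $\star \labTriEq{y} t \in T$ and $\Abst_y(T,S) = \Abst(t,\rho_S)$; in the second case $s \labTriEq{y} \star \in T$ and $\Abst_y(T,S) = \Abst(s,\sigma_S)$. In either situation $y\tau$ lies in an abstraction set of the form $\Abst(u,\sigma)$ with $\sigma \in \{\sigma_S,\rho_S\}$. By the defining clause of the abstraction set (Definition~\ref{def:abstraction}), every $r \in \Abst(u,\sigma)$ satisfies $\mathcal{V}(r) \subseteq \Dom(\sigma)$; instantiating this with $r = y\tau$ yields $\mathcal{V}(y\tau) \subseteq \Dom(\sigma)$. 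It then remains only to identify the domain: by Definition~\ref{def:subsaue}, $\sigma_S = \{z \mapsto s \mid s \labTriEq{z} t \in S\}$ and $\rho_S = \{z \mapsto t \mid s \labTriEq{z} t \in S\}$, so both substitutions have domain exactly $\labels{S}$, independently of the orientation. Hence $\Dom(\sigma) = \labels{S}$ in both cases, and we conclude $\mathcal{V}(y\tau) \subseteq \labels{S}$, as required.

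I do not expect a genuine obstacle here: the corollary is an immediate consequence of having packaged the restriction $\mathcal{V}(r)\subseteq\Dom(\sigma)$ into the very definition of $\Abst(\cdot,\sigma)$. The only points that require any care are carrying out the case analysis on the two orientations of a wild AUE and verifying that $\Dom(\sigma_S) = \Dom(\rho_S) = \labels{S}$ in both of them, so that the bound obtained from the abstraction set is uniformly $\labels{S}$ rather than being sensitive to which side the wildcard occupies.
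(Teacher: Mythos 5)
Your proof is correct and matches the paper's (implicit) reasoning exactly: the paper states this corollary without proof precisely because, as you observe, the containment $\mathcal{V}(y\tau)\subseteq \Dom(\sigma)$ is built into Definition~\ref{def:abstraction}, and $\Dom(\sigma_S)=\Dom(\rho_S)=\labels{S}$ by Definition~\ref{def:subsaue}. The only cosmetic remark is that you need only $\Dom(\sigma_S),\Dom(\rho_S)\subseteq\labels{S}$, not equality, so your argument is if anything slightly stronger than required.
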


We now provide a few examples illustrating the computation of final configurations using \AU, the construction of the abstraction sets, and substitutions associated with these final configurations. They highlight the complex structure of the computed sets of generalizations.

\begin{example}
\label{ex:procedure1}
Let $s=\varepsilon_f$ and $t= f(a,f(h(a),b))$. Applying \AU\  to the initial configuration $\langle \{s\labTriEq{x_{st}} t\};\emptyset ; \emptyset; \iota\rangle$  results in three different final configurations. 
%
%
\begin{align*}
\textbf{Configuration 1:} \qquad \qquad \qquad
\langle \{\varepsilon_f\labTriEq{x_{st}} f(a,f(h(a),b))\}; \emptyset;\emptyset;\iota\rangle &\dexpalf  \\[1mm]
          \langle \{\varepsilon_f\labTriEq{y} a\}; \emptyset;\{\star\labTriEq{z} f(h(a),b)\};\{x_{st}\mapsto f(y,z), x\mapsto f(y,z)\}\rangle& \dsolve\\[1mm]
          \langle \emptyset;\{\varepsilon_f\labTriEq{y} a\};\{\star\labTriEq{z} f(h(a),b)\};\{x_{st}\mapsto f(y,z), x\mapsto f(y,z)\}\rangle. &
    \end{align*}

Since $\Abst_z(f(h(a),b),\{y\mapsto a\})=\{f(h(a),b),f(h(y),b)\}$, the set of abstraction substitutions generated from the final abstraction and store ($T$ and $S$, respectively) is $\Psi(T,S)=\{\{z\mapsto f(h(a),b)\},\{z\mapsto f(h(y),b)\} \}$. Hence, the computed terms are $f(y,f(h(a),b))$ and $f(y,f(h(y),b)$, obtained by instantiating the initial label $x_{st}$ with the composition of the final substitution and the abstraction substitution. It is easy to see that the computed terms are indeed $\Abs$-generalizations of $\varepsilon_f$ and $f(a,f(h(a),b))$. The left and right substitutions are $\sigma =\{y\mapsto \varepsilon_f\}$, and $\rho =\{y\mapsto a\}$. 

From now on, we omit the starting label and denote the abstraction and store as $T$ and $S$, respectively.
%
%
\begin{align*}
\textbf{Configuration 2:} \qquad \qquad \qquad
\langle \{\varepsilon_f\labTriEq{x} f(a,f(h(a),b))\}; \emptyset;\emptyset;\iota\rangle&   \dexpals \\[1mm]
\langle \{\varepsilon_f\labTriEq{v} f(h(a),b)\}; \emptyset;\{\star\labTriEq{u} a\};\{x\mapsto f(u,v)\}\rangle &  \dexpalf \\[1mm]
\langle \{\varepsilon_f\labTriEq{y} h(a)\}; \emptyset;\{\star\labTriEq{u} a, \star\labTriEq{z} b\};\{x\mapsto f(u,f(y,z)),  v\mapsto f(y,z)\}\rangle &  \dsolve\\[1mm]
\langle\emptyset; \{\varepsilon_f\labTriEq{y} h(a)\};\{\star\labTriEq{u} a, \star\labTriEq{z} b\};\{x\mapsto f(u,f(y,z)),  v\mapsto f(y,z)\}\rangle.& 
\end{align*}

Since for the variables $u$ and $z$, $\Abst_u(T,S)=\Abst(a,\{y\mapsto h(a)\})=\{a\}$ and $\Abst_z(T,S)=\Abst(b,\{y\mapsto h(a)\})=\{b\}$, the abstraction substitution set $\Psi(T,S)$ is singleton, that is $\Psi(T,S)=\{\{u\mapsto a, z\mapsto b\}\}$ and the unique computed term in this configuration is  $f(a,f(y,b))$ (which is an $\Abs$-generalization of the input terms). The left and right substitutions are $\sigma =\{y\mapsto \varepsilon_f\}$, and $\rho =\{y\mapsto h(a)\}$.
%
\begin{align*}
\textbf{Configuration 3:} \qquad \qquad \qquad
\langle \{\varepsilon_f\labTriEq{x} f(a,f(h(a),b))\}; \emptyset;\emptyset;\iota\rangle&  \dexpals \\[1mm]
\langle \{\varepsilon_f\labTriEq{v} f(h(a),b)\}; \emptyset;\{\star\labTriEq{u} a\};\{x\mapsto f(u,v)\}\rangle&  \dexpals \\[1mm]
\langle \{\varepsilon_f\labTriEq{z} b\}; \emptyset;\{\star\labTriEq{u} a, \star\labTriEq{y} h(a)\};\{x\mapsto f(u,f(y,z)), v\mapsto f(y,z)\}\rangle& \dsolve\\[1mm]
\langle\emptyset; \{\varepsilon_f\labTriEq{z} b\};\{\star\labTriEq{u} a, \star\labTriEq{y} h(a)\};\{x\mapsto f(u,f(y,z)), v\mapsto f(y,z)\}\rangle. & 
\end{align*}

Since $\Abst_u(T,S)=\Abst(a,\{z\mapsto b\})=\{a\}$ and $\Abst_y(T,S)=\Abst(h(a),\{z\mapsto b\})=\{h(a)\}$, the abstraction substitution set is $\Psi(T,S)=\{\{u\mapsto a, y\mapsto h(a)\}\}$ and the unique answer on this branch is $f(a,f(h(a),z))$ (which is an $\Abs$-generalization of the $s$ and $t$). The left and right substitutions are $\sigma =\{y\mapsto \varepsilon_f\}$ and $\rho =\{y\mapsto b\}$.

Hence, for terms  $\varepsilon_f$ and $f(a,f(h(a),b))$, after abstracting the {\AU} computed solutions, we get the following set of $\Abs$-generalizations:
\[ \{f(y,f(h(a),b)),\ f(y,f(h(y),b), \ f(a,f(y,b)),\ f(a,f(h(a),z))\}.\]
\end{example}
The next example shows the application of $\AU$ to a more complex AUE, resulting in an infinite set of generalizations. 
\begin{example}
\label{ex:procedure2} Applying \AU~to $s\triangleq t$ where $s=g(\varepsilon_f,f(a,h(\varepsilon_f)))$ and $t= g(f(h(\varepsilon_f),a),\varepsilon_f)$ results in the following four configurations.
%
\begin{align*}
\textbf{Configuration 1:} \qquad 
\langle 
\{g(\varepsilon_f,f(a,h(\varepsilon_f)))\triangleq g(f(h(\varepsilon_f),a),\varepsilon_f)\}; \emptyset;\emptyset;\iota\rangle &  \ddec  \\[1mm]
\langle \{\varepsilon_f\labTriEq{w_1} f(h(\varepsilon_f),a), f(a,h(\varepsilon_f))\labTriEq{w_2} \varepsilon_f\}; \emptyset;\emptyset;\{x\mapsto g(w_1,w_2)\}\rangle  &\dexpalf \\[1mm]
\langle \{\varepsilon_f\labTriEq{u_1} h(\varepsilon_f), f(a,h(\varepsilon_f))\labTriEq{w_2} \varepsilon_f\}; \emptyset;\{\star\labTriEq{v_1} a\};&\\
\{x\mapsto g(f(u_1,v_1),w_2), w_1\mapsto f(u_1,v_1)\}\rangle & \dexparf \\[1mm]
\langle \{\varepsilon_f\labTriEq{u_1} h(\varepsilon_f), a\labTriEq{u_2} \varepsilon_f\}; \emptyset;\{\star\labTriEq{v_1} a,h(\varepsilon_f)\labTriEq{v_2} \star\};&\\
\{x\mapsto g(f(u_1,v_1),f(u_2,v_2)),  w_1\mapsto f(u_1,v_1), w_2\mapsto f(u_2,v_2)\}\rangle& \ddsolve{\times 2} \\[1mm]
\langle \emptyset; \{\varepsilon_f\labTriEq{u_1} h(\varepsilon_f), a\labTriEq{u_2} \varepsilon_f\};\{\star\labTriEq{v_1} a,h(\varepsilon_f)\labTriEq{v_2} \star\};&\\
\{x\mapsto g(f(u_1,v_1),f(u_2,v_2)),  w_1\mapsto f(u_1,v_1), w_2\mapsto f(u_2,v_2)\}\rangle
\end{align*}

Note, for the variable $v_1$, $\Abst_{v_1}(T,S)=\Abst(a,\{u_1\mapsto h(\varepsilon_f),u_2\mapsto \varepsilon_f\})=\{a\}$, and for the variable $v_2$,  $\Abst_{v_2}(T,S)=$ $\Abst(h(\varepsilon_f),\{u_1\mapsto \varepsilon_f,u_2\mapsto a\})$ is an infinite set, including $\{h(\varepsilon_f), h(u_1), h(f(u_1,a)),$ $  h(f(a, u_1)),  h(f(u_1,u_2)),\dots\}$.

Here, the set of abstraction substitutions $\Psi(T,S)$ is an infinite set including $\{\{v_1\mapsto a, v_2\mapsto h(\varepsilon_f)\}, \{v_1\mapsto a, v_2\mapsto h(u_1)\}, \{v_1\mapsto a, v_2\mapsto h(f(u_1,a))\}, \ldots\}$.

Some of the terms computed on this branch are $g(f(u_1,a)),f(u_2,\allowbreak h(\varepsilon_f)))$, $g(f(u_1,a)),f(u_2,h(u_1)))$, and $ g(f(u_1,a)),f(u_2,h(f(u_1,a))))$. They all $\Abs$-generalize the given terms $s$ and $t$.
%
\begin{align*}
\textbf{Configuration 2:} \qquad 
\langle \{g(\varepsilon_f,f(a,h(\varepsilon_f)))\triangleq g(f(h(\varepsilon_f),a),\varepsilon_f)\}; \emptyset;\emptyset;\iota\rangle& \ddec \\[1mm]
\langle \{\varepsilon_f\labTriEq{w_1} f(h(\varepsilon_f),a), f(a,h(\varepsilon_f))\labTriEq{w_2} \varepsilon_f\}; \emptyset;\emptyset;\{x\mapsto g(w_1,w_2)\}\rangle &  \dexpalf \\[1mm]
\langle \{\varepsilon_f\labTriEq{u_1} h(\varepsilon_f), f(a,h(\varepsilon_f))\labTriEq{w_2} \varepsilon_f\}; \emptyset;\{\star\labTriEq{v_1} a\};&\\
\{x\mapsto g(f(u_1,v_1),w_2),w_1\mapsto f(u_1,v_1)\}\rangle&  \dexpars\\[1mm]
\langle \{\varepsilon_f\labTriEq{u_1} h(\varepsilon_f), h(\varepsilon_f)\labTriEq{v_2} \varepsilon_f\}; \emptyset;\{\star\labTriEq{v_1} a,a\labTriEq{u_2} \star\};&\\
\{x\mapsto g(f(u_1,v_1),f(u_2,v_2)), w_1\mapsto f(u_1,v_1), w_2\mapsto f(u_2,v_2)\}\rangle&  \ddsolve{\times 2}\\[1mm]
\langle \emptyset; \{\varepsilon_f\labTriEq{u_1} h(\varepsilon_f), h(\varepsilon_f)\labTriEq{v_2} \varepsilon_f\};\{\star\labTriEq{v_1} a,a\labTriEq{u_2} \star\};&\\
\{x\mapsto g(f(u_1,v_1),f(u_2,v_2)), w_1\mapsto f(u_1,v_1), w_2\mapsto f(u_2,v_2)\}\rangle& 
\end{align*}

Then for the variable $v_1$, $\Abst_{v_1}(T,S)=\Abst(a,\{u_1\mapsto h(\varepsilon_f),v_2\mapsto \varepsilon_f\})=\{a\}$ and for $u_2$, 
$\Abst_{u_2}(T,S)=\Abst(a,\{u_1\mapsto \varepsilon_f,v_2\mapsto h(\varepsilon_f)\})=\{a\}$, i.e. on this branch we get $g(f(u_1,a)),f(a,v_2))$ (yet another generalization of $s$ and $t$).
%
\begin{align*}
\textbf{Configuration 3:} \qquad 
\langle \{g(\varepsilon_f,f(a,h(\varepsilon_f)))\triangleq g(f(h(\varepsilon_f),a),\varepsilon_f)\}; \emptyset;\emptyset;\iota\rangle&  \ddec \\[1mm]
\langle \{\varepsilon_f\labTriEq{w_1} f(h(\varepsilon_f),a), f(a,h(\varepsilon_f))\labTriEq{w_2} \varepsilon_f\}; \emptyset;\emptyset;\{x\mapsto g(w_1,w_2)\}\rangle & \dexpals \\[1mm]
\langle \{\varepsilon_f\labTriEq{v_1} a, f(a,h(\varepsilon_f))\labTriEq{w_2} \varepsilon_f\}; \emptyset;\{\star\labTriEq{u_1} h(\varepsilon_f)\};&\\
\{x\mapsto g(f(u_1,v_1),w_2), w_1\mapsto f(u_1,v_1)\}\rangle& \dexparf\\[1mm]
\langle \{\varepsilon_f\labTriEq{v_1} a, a\labTriEq{u_2} \varepsilon_f\}; \emptyset;\{\star\labTriEq{u_1} h(\varepsilon_f),h(\varepsilon_f)\labTriEq{v_2} \star\};&\\
\{x\mapsto g(f(u_1,v_1),f(u_2,v_2)),  w_1\mapsto f(u_1,v_1),  w_2\mapsto f(u_2,v_2)\}\rangle&  \ddsolve{\times 2}\\[1mm]
\langle \emptyset; \{\varepsilon_f\labTriEq{v_1} a, a\labTriEq{u_2} \varepsilon_f\};\{\star\labTriEq{u_1} h(\varepsilon_f),h(\varepsilon_f)\labTriEq{v_2} \star\};&\\
\{x\mapsto g(f(u_1,v_1),f(u_2,v_2)), w_1\mapsto f(u_1,v_1),  w_2\mapsto f(u_2,v_2)\}\rangle&
\end{align*}

\noindent Then $\Abst_{u_1}(T,S)=\Abst(h(\varepsilon_f),\{v_1\mapsto a,u_2\mapsto \varepsilon_f\})=\{h(\varepsilon_f),h(u_2),h(f(u_2,a)),\allowbreak \dots\}$ and
$\Abst_{v_2}(T,S)=\Abst(h(\varepsilon_f),\{v_1\mapsto \varepsilon_f,u_2\mapsto a\})=\{h(\varepsilon_f), h(v_1), h(f(v_1,a)),\allowbreak \dots \}$.
The set $\Psi(T,S)$ is infinite $\{\{u_1\mapsto h(\varepsilon_f), v_2\mapsto h(\varepsilon_f)\},$ $ \{u_1\mapsto h(\varepsilon_f), v_2\mapsto h(v_1)\}, \{v_1\mapsto a, v_2\mapsto h(f(u_1,a))\}, \ldots\}$. Some of the terms computed on this branch are $g(f(h(\varepsilon_f),v_1)),f(u_2,h(\varepsilon_f)))$, $g(f(h(u_2),v_1)),f(u_2, h(v_1)))$. They all are $\Abs$-generalizations of the given terms.
\vspace{-1.5mm}
%
\begin{align*}
\textbf{Configuration 4:} \qquad 
\langle \{g(\varepsilon_f,f(a,h(\varepsilon_f)))\triangleq g(f(h(\varepsilon_f),a),\varepsilon_f)\}; \emptyset;\emptyset;\iota\rangle&  \ddec \\[1mm]
\langle \{\varepsilon_f\labTriEq{w_1} f(h(\varepsilon_f),a), f(a,h(\varepsilon_f))\labTriEq{w_2} \varepsilon_f\}; \emptyset;\emptyset;\{x\mapsto g(w_1,w_2)\}\rangle & \dexpals \\[1mm]
\langle \{\varepsilon_f\labTriEq{v_1} a, f(a,h(\varepsilon_f))\labTriEq{w_2} \varepsilon_f\}; \emptyset;\{\star\labTriEq{u_1} h(\varepsilon_f)\};&\\
\{x\mapsto g(f(u_1,v_1),w_2), w_1\mapsto f(u_1,v_1)\}\rangle& \dexpars\\[1mm]
\langle \{\varepsilon_f\labTriEq{v_1} a, h(\varepsilon_f)\labTriEq{v_2} \varepsilon_f\}; \emptyset;\{\star\labTriEq{u_1} h(\varepsilon_f), a\labTriEq{u_2} \star\};&\\
\{x\mapsto g(f(u_1,v_1),f(u_2,v_2)),  w_1\mapsto f(u_1,v_1),  w_2\mapsto f(u_2,v_2)\}\rangle&  \ddsolve{\times 2}\\[1mm]
\langle \emptyset; \{\varepsilon_f\labTriEq{v_1} a, h(\varepsilon_f)\labTriEq{v_2} \varepsilon_f\};\{\star\labTriEq{u_1} h(\varepsilon_f),a\labTriEq{u_2} \star\};&\\
\{x\mapsto g(f(u_1,v_1),f(u_2,v_2)), w_1\mapsto f(u_1,v_1),  w_2\mapsto f(u_2,v_2)\}\rangle&
\end{align*}

Then $\Abst_{u_1}(T,S)=\Abst(h(\varepsilon_f),\{v_1\mapsto a,v_2\mapsto \varepsilon_f\})=\{h(\varepsilon_f),h(v_2),h(f(v_2,a)),\allowbreak \ldots\}$, $\Abst_{u_2}(T,S)=\Abst(a,\{v_1\mapsto \varepsilon_f,v_2\mapsto h(\varepsilon_f)\})=\{a\}$, and some of the terms computed of this branch are  $g(f(h(\varepsilon_f),v_1)),f(a,v_2))$,\, $g(f(h(v_2),v_1)),f(a,v_2))$, and $ g(f(h(f(v_2,\allowbreak a),v_1)),f(a,v_2))$. They all $\Abs$-generalize $s$ and $t$.
\end{example}





\section{Soundness and Completeness}
\label{sec:SoundnessAndCompleteness}

Preserving the stated properties of configurations (Definition \ref{def:configuration}) is essential to both the soundness and completeness proofs as these properties enforce consistency with respect to the use of the labels.
\begin{theorem}[Soundness]
\label{Thm:sound}
Let  $\mathcal{D}=\langle A_0; S_0; T_0; \theta_0\rangle \Longrightarrow^{*} \langle \emptyset; S_n;T_n;\theta_n\rangle$ be a derivation to a final configuration and $\langle \{x\theta_n\}_{x\in L}, \sigma_{\cal D}, \rho_{\cal D}  \rangle$ be the corresponding computed solution (see Definition~\ref{def:comsol}). Then for all $s\labTriEq{x} t\in A_0\cup S_0\cup T_0$, we have $x \theta_n\in \mathcal{G}_{\Abs}(s,t)$ with  $x \theta_n \sigma_{\mathcal{D}} \approxOf{\Abs} s$ and  $x \theta_n\rho_{\mathcal{D}}\approxOf{\Abs} t$.
\end{theorem}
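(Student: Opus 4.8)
The plan is to reduce both approximation claims to a single invariant preserved along the derivation, and to read off the membership $x\theta_n\in\mathcal{G}_{\Abs}(s,t)$ as an immediate consequence of Definition~\ref{def:egen}: once $x\theta_n\sigma_{\mathcal{D}}\approxOf{\Abs}s$ and $x\theta_n\rho_{\mathcal{D}}\approxOf{\Abs}t$ are established, the substitutions $\sigma_{\mathcal{D}}$ and $\rho_{\mathcal{D}}$ themselves witness $x\theta_n\genOf{\Abs}s$ and $x\theta_n\genOf{\Abs}t$. So the whole content of the theorem lies in the two $\approxOf{\Abs}$ statements.

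I would prove a strengthened claim $P$ by induction on the derivation $\mathcal{D}$ read backwards, from the final configuration toward $\langle A_0;S_0;T_0;\theta_0\rangle$. For every configuration $\langle A;S;T;\theta\rangle$ occurring in $\mathcal{D}$ and every $s\labTriEq{x}t\in A\cup S\cup T$, the invariant $P$ asserts: \emph{(a)} if the AUE is not wild then $x\theta_n\sigma_{\mathcal{D}}\approxOf{\Abs}s$ and $x\theta_n\rho_{\mathcal{D}}\approxOf{\Abs}t$; \emph{(b)} if it is $\star\labTriEq{x}t$ then $x\theta_n\rho_{\mathcal{D}}\approxOf{\Abs}t$; \emph{(c)} if it is $s\labTriEq{x}\star$ then $x\theta_n\sigma_{\mathcal{D}}\approxOf{\Abs}s$. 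The one-sidedness of clauses (b),(c) for wild AUEs is the design choice that makes the expansion rules go through. Since the final $\theta_n,\sigma_{\mathcal{D}},\rho_{\mathcal{D}}$ are fixed throughout, any AUE shared by two consecutive configurations satisfies $P$ in one exactly when it does in the other, so only the AUE rewritten by the applied rule needs fresh verification. I would also note the easy fact that $\star$ never occurs nested inside a term of the $A$-component (expansions place it only as a top-level side in $T$), so the children produced by $\rdec$ and the non-wild children of the expansions are genuinely non-wild and clause (a) applies to them.

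The base case is the final configuration $\langle\emptyset;S_n;T_n;\theta_n\rangle$: each label $x$ of $S_n\cup T_n$ lies outside $\Dom(\theta_n)$ by Definition~\ref{def:configuration}(i), so $x\theta_n=x$, and by Definition~\ref{def:subsaue} we have $x\sigma_{\mathcal{D}}=s$ and $x\rho_{\mathcal{D}}=t$ (reading $s=\star$ or $t=\star$ in the wild case), whence $P$ holds. For the inductive step I need one computational lemma: whenever a rule replaces the consumed label $x$ by a term $w$, so $\theta_{i+1}=\theta_i\{x\mapsto w\}$ with $w=f(y_1,\dots,y_n)$ for $\rdec$, $w=f(y_1,y_2)$ for the four expansions, or $w=y$ for $\rmer$, freshness gives $y_j\theta_i=y_j$ and $x\notin\mathcal{V}(w)$; writing $\theta_n=\theta_{i+1}\mu$ one then computes $w\theta_{i+1}=w$ and hence $x\theta_n=w\theta_n=f(y_1\theta_n,\dots,y_n\theta_n)$. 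Substituting this identity and using that $\approxOf{\Abs}$ is a congruence, the $\rdec$ and $\rmer$ cases follow directly from clause (a) of the IH on the children, and the $\rsolve$ case is trivial since the AUE is merely relocated and $\theta$ is unchanged.

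The main obstacle is the four expansion rules, where the argument lives or dies on the interaction of the wild child with absorption. For $\rexpalf$ acting on $\varepsilon_f\labTriEq{x}f(t_1,t_2)$ we obtain the non-wild $\varepsilon_f\labTriEq{y_1}t_1$ and the wild $\star\labTriEq{y_2}t_2$, with $x\theta_n=f(y_1\theta_n,y_2\theta_n)$. On the $\rho$-side, clause (a) gives $y_1\theta_n\rho_{\mathcal{D}}\approxOf{\Abs}t_1$ and clause (b) gives $y_2\theta_n\rho_{\mathcal{D}}\approxOf{\Abs}t_2$, so congruence yields $x\theta_n\rho_{\mathcal{D}}\approxOf{\Abs}f(t_1,t_2)$. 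On the $\sigma$-side I only know $y_1\theta_n\sigma_{\mathcal{D}}\approxOf{\Abs}\varepsilon_f$ and nothing about $y_2\theta_n\sigma_{\mathcal{D}}$; this is precisely where the absorption axiom collapses the unconstrained contribution, since $x\theta_n\sigma_{\mathcal{D}}=f(y_1\theta_n\sigma_{\mathcal{D}},y_2\theta_n\sigma_{\mathcal{D}})\approxOf{\Abs}f(\varepsilon_f,y_2\theta_n\sigma_{\mathcal{D}})\approxOf{\Abs}\varepsilon_f$. The remaining rules are symmetric: $\rexpals$ uses $f(z,\varepsilon_f)\approxOf{\Abs}\varepsilon_f$, while $\rexparf$ and $\rexpars$ place the absorbing equation on the $\rho$-side (there $\varepsilon_f$ sits on the right) and carry the generalization on the $\sigma$-side through clause (c). Applying $P$ at the first configuration $\langle A_0;S_0;T_0;\theta_0\rangle$ gives the theorem.
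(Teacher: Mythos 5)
Your proposal is correct and takes essentially the same route as the paper's proof: an induction along the derivation with a rule-by-rule case analysis, where the base case exploits that final labels satisfy $x\theta_n=x$, the \rdec/\rmer/\rsolve\ cases follow by congruence or relocation, and the expansion cases are discharged by the absorption axiom collapsing the unconstrained wild-side instance (the paper phrases this by letting $\sigma_{\mathcal{D}}$, $\rho_{\mathcal{D}}$ map wild labels to the literal constant $\star$ and computing $f(\varepsilon_f,\star)\approxOf{\Abs}\varepsilon_f$, whereas you drop the wild side from the invariant and absorb an arbitrary term --- a cosmetic difference, and your one-sided clauses (b),(c) still yield the theorem's full claim for AUEs in $T_0$ since $T$-labels are never consumed, so your base-case observation applies to them verbatim).
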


\begin{proof} 
We proceed by induction over the derivation length.\vspace{0.2cm}

\noindent{\bf Basecase}. 
If the derivation has length 0, then it starts with a final configuration, and we only need to consider AUEs in $S_0$ and $T_0$. By the properties of configurations (Definition~ \ref{def:configuration})  we get $x\theta_0=x$ for all $s\labTriEq{x} t\in S_0 \cup T_0$. Also, $x\theta_0\sigma_{\cal D}=s$ and $x\theta_0\rho_{\cal D}=t$ hold. \vspace{0.2cm}

\noindent{\bf Stepcase.}  Now consider our derivation having the following form: \vspace{-1mm}
\begin{equation}\label{eq:inducdev}
\langle A_0;S_0;T_0;\theta_0 \rangle\Longrightarrow\langle A_1; S_1; T_1; \theta_1\rangle \Longrightarrow^n \langle \emptyset; S_{n+1};T_{n+1};\theta_{n+1}\rangle \end{equation} 

We assume for the induction hypothesis (IH) that for   $\langle A_1; S_1; T_1; \theta_1\rangle \Longrightarrow^n \langle \emptyset; S_{n+1};T_{n+1};\theta_{n+1}\rangle$ the theorem holds and show it for derivations of the form presented in (\ref{eq:inducdev}). We continue the proof considering the various options for the transition from $\langle A_0;S_0;T_0;\theta_0 \rangle$ to $\langle A_1;S_1;T_1;\theta_1\rangle$.

\begin{enumerate}
\item {\bf \rdec}. Assume that the derivation is of the form: 
\vspace{-1.5mm}
\begin{align*}
\langle \{f(s_1,\dots, s_m)\labTriEq{y} f(t_1,\dots, t_m)\}\cupdot A';S_0;T_0;\theta_0 \rangle\ddec\\
\langle \{s_1\labTriEq{x_1} t_1,\dots,s_m\labTriEq{x_m} t_m\}\cupdot A'; S_1; T_1;\theta_1\rangle\Longrightarrow^n \langle \emptyset; S_{n+1};T_{n+1};\theta_{n+1}\rangle&
\end{align*}
where $\theta_1 = \theta_0\{y\mapsto f(x_1,\dots,x_m)\}$.  By the IH, we have $x_i\theta_{n+1}\in \mathcal{G}_{\Abs}(s_i,t_i)$ for each $1\leq i\leq m$, with $x_i\theta_{n+1}\sigma_{\mathcal{D}}\approxOf{\Abs} s_i$ and $x_i\theta_{n+1}\rho_{\mathcal{D}}\approxOf{\Abs} t_i$. Hence, $y\theta_{n+1}\sigma_{\mathcal{D}}=f(x_1\theta_{n+1}\sigma_{\mathcal{D}},\dots,x_m\theta_{n+1}\sigma_{\mathcal{D}})\approxOf{\Abs} f(s_1,\dots, s_m)$ and $y\theta_{n+1}\rho_{\mathcal{D}}=f(x_1\theta_{n+1}\rho_{\mathcal{D}},\dots,x_m\theta_{n+1}\rho_{\mathcal{D}})\approxOf{\Abs} f(t_1,\dots, t_m)$.

\item {\bf \rsolve}. Assume that the derivation is of the form: \vspace{-1.5mm}
\[\langle\{s\labTriEq{y} t\}\cupdot A';S_0;T_0;\theta_0\rangle \dsolve
\langle\{A';S_1  ;T_0;\theta_0\}\rangle\Longrightarrow^{n} \langle \emptyset;S_{n+1};T_{n+1};\theta_{n+1}\rangle,\]
where $S_1 =\{s\labTriEq{y} t\} \cup S_0$. By IH, $\theta_{n+1}$ generalizes all the AUEs with labels in $S_1$. Thus, $y\theta_{n+1}\in \mathcal{G}_{\Abs}(s,t)$ with $y\theta_{n+1}\sigma_{\mathcal{D}}\approxOf{\Abs} s$ and $y\theta_{n+1}\rho_{\mathcal{D}}\approxOf{\Abs} t$.

\item {\bf \rexpalf}. Assume that the derivation is of the form:\vspace{-1.5mm}
\begin{align*}
  \langle \{\varepsilon_f\labTriEq{y} f(s, t)\}\cupdot A';S_0;T_0;\theta_0 \rangle\dexpalf\\
\langle \{\varepsilon_f\labTriEq{x_1} s\}\cup A'; S_1; T_1; \theta_0^1\}\rangle\Longrightarrow^n \langle \emptyset; S_{n+1};T_{n+1};\theta_{n+1}\rangle 
\end{align*}
where $T_1 =\{\star\labTriEq{x_2} t\}\cup T_0$ and $\theta_1 =\theta_0\{y\mapsto f(x_1,x_2)$\}. By the IH, all the AUEs in $S_1\cup T_1$ are generalized by the substitution  $\theta_{n+1}$. This implies that $x_1\theta_{n+1}\in \mathcal{G}_{\Abs}(\varepsilon_f,s)$, and $x_2\theta_{n+1}\in \mathcal{G}_{\Abs}(\star,t)$ with the substitutions $\sigma_{\mathcal{D}}$ and $\rho_{\mathcal{D}}$, respectively. Additionally, $y\theta_{n+1}\sigma_{\mathcal{D}}=f(x_1\theta_{n+1}\sigma_{\mathcal{D}},x_2\theta_{n+1}\sigma_{\mathcal{D}})\approxOf{\Abs} f(\varepsilon_f,\star)\approxOf{\Abs} \varepsilon_f$ and  $y\theta_{n+1}\rho_{\mathcal{D}}=f(x_1\theta_{n+1}\rho_{\mathcal{D}},x_2\theta_{n+1}\rho_{\mathcal{D}})\approxOf{\Abs} f(s,t)$. Hence,  $y\theta_{n+1}\in \mathcal{G}_{\Abs}(\varepsilon_f,f(s,t))$.

\item The analysis of other expansion rules 
is similar to the analysis of {\bf \rexpalf}.

\item {\bf \rmer} Assume that the derivation is of the form: \vspace{-1.5mm}
\begin{align*}
   \langle\emptyset;\{s\labTriEq{y} t, s\labTriEq{z} t\}\cup S';T_0;\theta_0\rangle \dmer\\
\langle\{\emptyset;\{s\labTriEq{z} t\} \cup S'  ;T_0;\theta_0\{y\mapsto z\}\}\rangle\Longrightarrow^{n} \langle \emptyset;S_{n+1};T_{n+1};\theta_{n+1}\rangle. 
\end{align*}

Notice that $\theta_{1}=\theta_0\{y\mapsto z\}$, where $z$ is the label of the AUE  $\{s\labTriEq{z} t\}\in S_0$. By IH, $z\theta_{n+1}$ is a generalization of $s$ and $t$. Then, $y\theta_{n+1}=y\{y\mapsto z\}\theta_{n+1}=z\theta_{n+1}$ is a generalization of $s$ and $t$ with substitutions $\sigma_{\mathcal{D}}$ and $\rho_{\mathcal{D}}$, respectively.   \qed
\end{enumerate}
\end{proof}

While the soundness theorem covers the construction of generalizations of AUEs present in a given configuration, it does not consider the abstraction set or the construction of more specific generalizations when generalizing over an absorption theory. The abstraction set allows us to consider generalizations between a given term and an arbitrary term.  
\begin{lemma}
\label{wildlemma}
Let  $\langle A_0; S_0; T_0; \theta_0\rangle \Longrightarrow^{*} \langle \emptyset; S_n;T_n;\theta_n\rangle$ be a derivation. Then for all $\star\labTriEq{u} t\in T_n$ (resp. for all $s \labTriEq{u} \star \in T_n$) and $\tau\in\Psi(T_n,S_n)$, there exists a term $r$ such that $u\tau\in\mathcal{G}_{\Abs}(r,t)$ (resp. $u\tau\in\mathcal{G}_{\Abs}(r,s)$).
\end{lemma}

\begin{proof}
The proof is by induction over the $len(u\tau)$. We prove it for $\star\labTriEq{u} t\in T_n$. The other case is analogous.
\begin{itemize}
    \item If $len(u\tau)=1$ then $u\tau=x$ or $u\tau=a$. If $u\tau=x$, then $r$ could be any term, and the lemma holds. If $u\tau=a$, then $t=a$ and, from the definition of the abstraction set,  $r=t$ and the lemma holds again.
    \item If $len(u\tau)=n$, then we can assume $u\tau=f(s_1,\dots,s_n)$, from the definition of the abstraction set we have $t=f(t_1,\dots,t_n)$, and by the IH there exists $r_i$ such that $s_i\in \mathcal{G}_{\Abs}(t_i,r_i)$ for $1\leq i\leq n$. Therefore, we can take $r=f(r_1,\dots,r_n)$. \qed
\end{itemize}
\end{proof}

This result intuitively means the following: observe that $u\tau\in \Abst_u(T_n,S_n)$ implying that $u\tau\in \Abst(t,\sigma_\mathcal{D})$. From this observation, we can deduce that $u\tau\sigma_\mathcal{D}\approxOf{\Abs} t$. Thus, for every AUE in the set $T_n$, the wild card can be interpreted as $r$ and $u\tau\rho_\mathcal{D}\approxOf{\Abs} r$. It leads to the following result: 

 \begin{theorem}
 \label{thm:snd_tau}
Let  $\langle A_0; S_0; T_0; \theta_0\rangle \Longrightarrow^{*} \langle \emptyset; S_n;T_n;\theta_n\rangle$ be a derivation to a final configuration and $s\labTriEq{x} t\in A_0\cup S_0$. Then $x\theta_n\tau \in \mathcal{G}_{\Abs}(s,t)$, where $\tau\in\Psi(T_n,S_n)$.
 \end{theorem}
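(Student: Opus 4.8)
The plan is to strengthen the statement so that it admits the same induction on derivation length used for Theorem~\ref{Thm:sound}, carrying the abstraction substitution $\tau$ through each rule. Concretely, I would prove that for every $s\labTriEq{x}t\in A_0\cup S_0$ one has $x\theta_n\tau\sigma_{S_n}\approxOf{\Abs} s$ and $x\theta_n\tau\rho_{S_n}\approxOf{\Abs} t$, where $\sigma_{S_n},\rho_{S_n}$ are the left and right substitutions of the final store. These are the right witnesses: since the final configuration has $A_n=\emptyset$, property (ii) of Definition~\ref{def:configuration} gives $\RVar(\theta_n)=\labels{S_n}\cup\labels{T_n}$, and by Corollary~\ref{cor:rangebnd} each $T_n$-label is sent by $\tau$ to a term whose variables lie in $\labels{S_n}$; hence $\mathcal{V}(x\theta_n\tau)\subseteq\labels{S_n}$, so applying $\sigma_{S_n}$ (resp.\ $\rho_{S_n}$) is well defined and agrees with applying $\sigma_{\cal D}$ (resp.\ $\rho_{\cal D}$). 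Note that $\tau\in\Psi(T_n,S_n)$ depends only on the final configuration, which is shared by every suffix of the derivation, so the induction hypothesis may reuse the same $\tau$.

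For the empty derivation the AUE already lies in the store, so $x\theta_n=x$ and $x\notin\Dom(\tau)$; thus $x\theta_n\tau=x$ and the two instances are literally $s$ and $t$. For the induction step I would split on the first rule. The cases \rdec, \rsolve, and \rmer\ reuse the computations of Theorem~\ref{Thm:sound} almost verbatim: there $\theta$ creates no new $T_n$-label and $\tau$ simply commutes with the decomposition or with the renaming $y\mapsto z$ of \rmer. The informative cases are the expansions. Consider \rexpalf\ applied to $\varepsilon_f\labTriEq{y}f(s,t)$, producing $\varepsilon_f\labTriEq{x_1}s$ in the unsolved set, the wild AUE $\star\labTriEq{x_2}t$ in the abstraction, and the binding $y\mapsto f(x_1,x_2)$ in $\theta$. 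Since no rule ever removes an equation from the abstraction component, $\star\labTriEq{x_2}t$ persists unchanged to $T_n$; hence $x_2\in\Dom(\tau)$ with $x_2\tau\in\Abst(t,\rho_{S_n})$, and $x_2\theta_n=x_2$. The induction hypothesis applied to $\varepsilon_f\labTriEq{x_1}s$ yields $x_1\theta_n\tau\sigma_{S_n}\approxOf{\Abs}\varepsilon_f$ and $x_1\theta_n\tau\rho_{S_n}\approxOf{\Abs} s$.

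The two instances of $y\theta_n\tau=f(x_1\theta_n\tau,\,x_2\tau)$ then follow. On the right, $f(x_1\theta_n\tau\rho_{S_n},x_2\tau\rho_{S_n})\approxOf{\Abs} f(s,t)$, because $x_2\tau\rho_{S_n}\approxOf{\Abs} t$ is exactly the defining property of the abstraction set $\Abst(t,\rho_{S_n})$ (Definition~\ref{def:abstraction}). On the left the companion instance collapses: $f(x_1\theta_n\tau\sigma_{S_n},x_2\tau\sigma_{S_n})\approxOf{\Abs} f(\varepsilon_f,x_2\tau\sigma_{S_n})\approxOf{\Abs}\varepsilon_f$ by the absorption axiom, regardless of the value $x_2\tau\sigma_{S_n}$. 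This is the crux of the argument and the only place I expect real subtlety: one must be sure that every wild label is instantiated by $\tau$ at a position guarded by an absorption context, so that refining $\star$ into an arbitrary $\Abs$-generalization cannot disturb the collapsing instance. The inductive formulation handles this automatically, since the guard $f(\varepsilon_f,-)$ is precisely the term recorded in $\theta$ by the expansion rule and the induction hypothesis provides $\approxOf{\Abs}\varepsilon_f$ for the companion subterm.

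The remaining expansions are the same computation with the guard and the collapsing side relocated: \rexpals\ uses the guard $f(-,\varepsilon_f)$ on the left, while \rexparf\ and \rexpars\ produce the symmetric guards $f(\varepsilon_f,-)$ and $f(-,\varepsilon_f)$ on the \emph{right}, with the roles of $\sigma_{S_n}$ and $\rho_{S_n}$ (equivalently, of $s$ and $t$) interchanged; the wild label is then reconstructed through $\Abst(s,\sigma_{S_n})$ instead of $\Abst(t,\rho_{S_n})$. Finally \rmer\ is immediate, as $y\theta_n\tau=z\theta_n\tau$ for the merged label $z$. Taking $\sigma_{\cal D}=\sigma_{S_n}$ and $\rho_{\cal D}=\rho_{S_n}$ as witnesses on $\mathcal{V}(x\theta_n\tau)$ then gives $x\theta_n\tau\in\mathcal{G}_{\Abs}(s,t)$, as required.
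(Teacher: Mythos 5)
Your proof is correct, but it takes a genuinely different route from the paper's. The paper obtains the theorem in a few lines from machinery it has already established: Theorem~\ref{Thm:sound} gives $x\theta_n\in\mathcal{G}_{\Abs}(s,t)$, and then, observing that each wild label $u\in\labels{T_n}$ is untouched by $\theta_n$ (so $u\theta_n\tau=u\tau$) and occurs exactly once in $x\theta_n$, it invokes Lemma~\ref{wildlemma} --- each $u\tau$ is itself an $\Abs$-generalization of the non-wild side paired with some term $r$ interpreting $\star$ --- to conclude that replacing every $u$ by $u\tau$ preserves the generalization property. You instead rerun the induction over the derivation with a strengthened invariant, $x\theta_n\tau\sigma_{S_n}\approxOf{\Abs}s$ and $x\theta_n\tau\rho_{S_n}\approxOf{\Abs}t$, carrying one fixed $\tau$ through every rule; your supporting observations are all sound (the final configuration, hence $\Psi(T_n,S_n)$ and $\tau$, is shared by every suffix of the derivation; wild AUEs persist unchanged, so $x_2\theta_n=x_2$; Corollary~\ref{cor:rangebnd} together with property (ii) of Definition~\ref{def:configuration} gives $\mathcal{V}(x\theta_n\tau)\subseteq\labels{S_n}$, legitimizing $\sigma_{S_n},\rho_{S_n}$ as witnesses), and the expansion case is discharged exactly right: the abstraction-set identity $x_2\tau\rho_{S_n}\approxOf{\Abs}t$ matches the non-wild side, while the absorption guard recorded in $\theta$ plus the induction hypothesis ($x_1\theta_n\tau\sigma_{S_n}\approxOf{\Abs}\varepsilon_f$) collapses the other side regardless of the value of $x_2\tau\sigma_{S_n}$. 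What your route buys: it bypasses Lemma~\ref{wildlemma} entirely, makes the witnessing substitutions explicit (the paper's statement only asserts membership in $\mathcal{G}_{\Abs}(s,t)$), and it does not need the single-occurrence property of wild labels as a separate fact, since freshness of expansion labels and the uniform collapse under the guard render it irrelevant. What the paper's route buys is brevity and reuse of Lemma~\ref{wildlemma}, at the cost of a terser final step ("considering these facts together") than your fully spelled-out expansion case, which is where the real content of the theorem lies.
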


\begin{proof}
From Theorem~\ref{Thm:sound}, $x\theta_n\in \mathcal{G}_{\Abs}(s,t)$. 
Furthermore, every $u\in \labels{T_n}$ is unique, only occurs once in $x\theta_n$, and $u\theta_n\tau=u\tau$. Considering these facts together with  Lemma~\ref{wildlemma} and $u$ being an $\Abs$-generalization of the respective subterms in $s$ and $t$, we deduce that $x\theta_n \tau\in \mathcal{G}_{\Abs}(s,t)$.
\qed
\end{proof} 

\noindent We are now ready to prove the completeness of $\AU$.
\begin{theorem}[Completeness]
\label{Thm:Completeness1}
Let  $r\in \mathcal{G}_{\Abs}(t_1,t_2)$. 
Then for every configuration $\langle A; S; T; \theta\rangle$ such that  $t_1\labTriEq{x} t_2 \in A$ for some $x$ there exist a final configuration $\langle \emptyset; S';T';\theta'\rangle\in \AU(\langle A; S;T;\theta\rangle)$ and $\tau\in\Psi(T',S')$ such that $r\genOf{\Abs}x\theta'\tau$.
\end{theorem}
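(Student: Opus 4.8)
The plan is to argue by induction on $len(t_1)+len(t_2)$, the combined size of the focused AUE $t_1\labTriEq{x}t_2$. This measure strictly decreases along the lineage of that AUE under every rule that can apply to it: \rdec\ replaces $f(p_1,\dots,p_n)\labTriEq{x}f(q_1,\dots,q_n)$ by the smaller $p_i\labTriEq{y_i}q_i$, and the expansion rules replace $\varepsilon_f\labTriEq{x}f(q_1,q_2)$ by $\varepsilon_f\labTriEq{y}q_i$ together with one wild AUE, exactly as in the termination argument (Theorem~\ref{Thm:Ter}). Since every rule acts on a single AUE and always introduces fresh labels, the sub-derivations issuing from distinct AUEs are independent; hence it suffices to decide, step by step, which rule to apply to the descendants of $t_1\labTriEq{x}t_2$ and then to assemble these local choices into one derivation of a final configuration in $\AU(\langle A;S;T;\theta\rangle)$ and one global $\tau\in\Psi(T',S')$. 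The case analysis is driven by $head(t_1)$ and $head(t_2)$, which fixes the applicable rule.

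First I would dispatch the two non-recursive cases. If $r$ is a variable, it $\Abs$-generalizes every term, so $r\genOf{\Abs}x\theta'\tau$ for any final configuration (one exists by Theorem~\ref{Thm:Ter}) and any $\tau$. If $head(t_1)\neq head(t_2)$ and they are not related absorption symbols, the rule \rsolve\ applies; a routine analysis of $\Abs$-normal forms (using that distinct absorption symbols carry distinct constants) shows that a non-variable $r$ would force the two heads to coincide or one side to collapse to an absorption constant, contradicting the case hypothesis, so $r$ is a variable and we are back to the previous case. If $head(t_1)=head(t_2)=f$, the rule \rdec\ applies, and the same normal-form analysis shows that a non-variable $r$ must have head $f$, say $r=f(r_1,\dots,r_n)$, with the witnessing substitutions splitting so that $r_i\in\mathcal{G}_{\Abs}(p_i,q_i)$ for each $i$. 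I apply the induction hypothesis to each sub-AUE $p_i\labTriEq{y_i}q_i$ and, using the independence of the fresh-labelled sub-derivations, merge the chosen branches into a single final configuration and the chosen abstraction substitutions into a single $\tau$ with $r_i\genOf{\Abs}y_i\theta'\tau$ for all $i$; since $x\theta'\tau=f(y_1\theta'\tau,\dots,y_n\theta'\tau)$, it remains to glue the per-argument witnesses into one substitution.

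The delicate recursive case is expansion, say $t_1=\varepsilon_f$ and $t_2=f(q_1,q_2)$. A non-variable $r$ must then have the form $f(r_1,r_2)$; inspecting the substitution $\sigma$ with $r\sigma\approxOf{\Abs}\varepsilon_f$ tells us which argument collapses, i.e.\ whether $r_1\sigma\approxOf{\Abs}\varepsilon_f$ or $r_2\sigma\approxOf{\Abs}\varepsilon_f$, and I follow \rexpalf\ or \rexpals\ accordingly (symmetrically \rexparf\ or \rexpars\ when $t_2=\varepsilon_f$). The collapsing side yields an unsolved AUE $\varepsilon_f\labTriEq{y}q_i$ handled by the induction hypothesis, whereas the non-collapsing side yields a wild AUE $\star\labTriEq{y'}q_j$; here I choose $y'\tau$ from $\Abst_{y'}(T',S')$ to be a term subsumed by $r_j$. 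The term $q_j$ itself is always available in that set, and, crucially, so are the store labels guaranteed by Corollary~\ref{cor:rangebnd}.

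The main obstacle throughout is this final gluing step: producing one witness $\delta$ with $r\delta\approxOf{\Abs}x\theta'\tau$ when $r$ shares variables across argument positions, since then the corresponding parts of $x\theta'\tau$ must be made $\Abs$-equal. The key observation that makes this possible is that a repeated variable $z$ of $r$ forces the sub-AUEs at its two positions to be $\Abs$-equal, because $z\sigma$ and $z\rho$ are common to both positions; we are therefore free to steer the derivation and $\tau$ so that the two computed sub-generalizations coincide. Equal solved AUEs are identified by \rmer, and the mixed solved/wild situations are reconciled by choosing the appropriate store label as the value of $\tau$ on the wild label (exactly as when $f(z,z)$ generalizes $\varepsilon_f$ and $f(q,q)$, where picking the store label $y_1$ for the wild position forces $x\theta'\tau=f(y_1,y_1)$ and hence $r\{z\mapsto y_1\}\approxOf{\Abs}x\theta'\tau$). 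Managing this bookkeeping, so that the label-to-variable correspondence underlying $\delta$ is consistent, is where the real work lies; the head-based case analysis and the inductive descent are otherwise routine.
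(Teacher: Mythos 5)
Your skeleton coincides with the paper's proof in all essentials: an induction that tracks the rule applications (the paper inducts structurally on $r$ rather than on $len(t_1)+len(t_2)$, but the case split by applicable rule is the same), the induction hypothesis applied to the sub-AUEs created by \rdec\ and the expansion rules, abstraction substitutions for wild labels chosen from $\Abst_{y'}(T',S')$ (where $q_j$ and store-label terms are available, per Corollary~\ref{cor:rangebnd}), and \rmer\ plus store-label-valued $\tau$ entries for the gluing. So the route is not genuinely different; the question is whether your gluing argument goes through.

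It does not, and the failure is exactly at what you call the key observation: the claim that a repeated variable $z$ of $r$ forces the sub-AUEs at its two positions to be $\Abs$-equal is false. An occurrence of $z$ may sit inside material that is absorbed on one side only, and on \emph{different} sides at different occurrences; then the occurrences induce no pair of $\Abs$-equal sub-AUEs at all, and no single store label or abstraction term can serve both positions. Concretely, let $t_1=g(\varepsilon_f,f(c,d))$, $t_2=g(f(c',d'),\varepsilon_f)$ and $r=g(f(z,z_2),f(z_3,z))$, a generalization via $\sigma=\{z\mapsto d,\ z_2\mapsto\varepsilon_f,\ z_3\mapsto c\}$ and $\rho=\{z\mapsto c',\ z_2\mapsto d',\ z_3\mapsto\varepsilon_f\}$: the first occurrence of $z$ is absorbed in $t_1$ but matches $c'$ in $t_2$, while the second matches $d$ in $t_1$ but is absorbed in $t_2$, so the two descendants you would want to make coincide are a solved AUE and a wild AUE with unrelated contents (e.g.\ $\varepsilon_f\labTriEq{u_1}c'$ versus the wild side of $f(c,d)\labTriEq{}\varepsilon_f$), and neither \rmer\ nor any choice of $\tau$ identifies them; your recipe assigns $z$ no consistent value. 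The paper does not rely on your observation: its stepcase 1 first isolates exactly the shared variables that do not generalize any common surviving position (its case (i)) and \emph{specializes them to the appropriate $\varepsilon_f$}, obtaining a more specific generalization $r'$ with $r\genOf{\Abs}r'$, and only then runs the align-via-the-store argument (case (ii), with the $S^*\subseteq S''$, $\gamma_i$, $\mu_i^1\mu_i^2$ machinery) for variables that do meet at a common position of $t_1$ and $t_2$. That specialization move is the missing idea in your sketch, and it is load-bearing: without some treatment of absorbed occurrences, the gluing step is simply unsound as stated. A second, smaller slip: in your \rdec\ case the normal-form analysis (``a non-variable $r$ must have head $f$'') fails when $t_1=t_2=\varepsilon_f$, where $r$ may be headed by the absorption symbol $f$ and collapse on both sides (e.g.\ $r=f(z,b)$ with $z\sigma=z\rho=\varepsilon_f$); this corner is harmless, since then $r\genOf{\Abs}\varepsilon_f=x\theta'$ directly, but your head-driven case analysis as written assigns it no case and it must be dispatched explicitly.
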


\begin{proof}
The proof is by structural induction over $r$. \vspace{-1mm}
\begin{center} \textbf{Basecase}\vspace{-2mm}\end{center}
  \begin{enumerate}
  \item Let $r$ be a variable. Then, we must consider the following three cases: 
    \begin{enumerate}
        \item If $head(t_1)=head(t_2)$, then from a configuration $\langle A; S; T; \theta\rangle$ such that  $t_1\labTriEq{x} t_2 \in A$, we can reach the configuration   $\langle A'; S; T; \theta'\rangle$ by applying the decomposition rule such that $head(x\theta')=head(t_1)=head(t_2)$. Thus, for any final configuration $\langle \emptyset; S''; T''; \theta''\rangle\in \AU(\langle A'; S; T; \theta'\rangle)$, $r\genOf{\Abs}x\theta''$ as $\theta''$ can only be more specific than $\theta'$. 
        \item If, w.l.o.g, $t_1 = \varepsilon_f$ and $t_2=f(s_1,s_2)$, then from a configuration $\langle A; S; T; \theta\rangle$ such that  $t_1\labTriEq{x} t_2 \in A$, we can reach a configuration   $\langle A'; S'; T'; \theta'\rangle$ using the expansion rule \textit{ExpLA1} such that $head(x\theta')=head(t_2)$. Thus, for any final configuration $\langle \emptyset; S''; T''; \theta''\rangle\in \AU(\langle A'; S'; T'; \theta'\rangle)$, $r\genOf{\Abs}x\theta''$ as $\theta''$ can only be more specific than $\theta'$.
        \item Otherwise, if $head(t_1)\not =head(t_2)$, then from a configuration $\langle A; S; T; \theta\rangle$ such that  $t_1\labTriEq{x} t_2 \in A$, we can reach a configuration $\langle A'; S'; T; \theta\rangle$ using the \textit{Solve} rule where  $t_1\labTriEq{x} t_2 \in S$. Thus, for any final configuration $\langle \emptyset; S''; T''; \theta''\rangle\in \AU(\langle A'; S'; T; \theta\rangle)$, we get $r\approxOf{\Abs} x\theta''$.
    \end{enumerate}
    In all three cases $r\genOf{\Abs} x\theta''$ and by Theorem~\ref{thm:snd_tau} we get $r\genOf{\Abs} x\theta'\tau$.
  \item Let $r$ be a constant. Then $t_1=t_2=r$  and from a configuration $\langle A; S; T; \theta\rangle$ where  $t_1\labTriEq{x} t_2 \in A$, we can reach a configuration   $\langle A'; S; T; \theta'\rangle$ using the decomposition rule such that $x\theta'=t_1=t_2=r$. Thus,  for any final configuration $\langle \emptyset; S''; T''; \theta''\rangle\in \AU(\langle A'; S'; T'; \theta'\rangle)$, $r\genOf{\Abs} x\theta''\tau$ trivially follows.\vspace{-1mm}
 \end{enumerate}
 \begin{center} \textbf{Stepcase}\vspace{-2mm}\end{center}
 \begin{enumerate}
     \item $r=g(r_1,\ldots,r_n)$, $t_1=g(t'_1,\ldots,t'_n)$, and $t_2=g(t''_1,\ldots,t''_n)$; This implies that 
     $r_i$ is a generalization of $t'_i \labTriEq{y_i} t''_i$ for $1\leq i \leq n$. From $\langle 
     A; S; T; \theta\rangle$ we can reach a state $\langle A'; S'; T'; \theta'\rangle$, using the decomposition 
     rule, such that  $t'_i \labTriEq{y_i} t''_i\in A'$.

\hspace{1em} Note that there may exist  $1\leq i < j \leq n$ such that $\mathcal{V}(r_i)\cap \mathcal{V}(r_j)\not = \emptyset$. Let $R\subseteq \mathcal{V}(r)$ such that for $z\in R$ there exist $1\leq i < j \leq n$ such that $z\in \mathcal{V}(r_i)\cap \mathcal{V}(r_j)$. For any $z\in R$, there are two 
cases to consider: 
\begin{enumerate}[label=(\roman*)] 
    \item  There does not exist a position $p\in \mathit{pos}(t_1)\cap\mathit{pos}(t_2)$ such that the variable $z$ generalizes $s^*\triangleq t^*$ where $s^*=t_1\vert_p$ and  $t^* = t_2\vert_p$. In other words, $z$ 
generalizes terms which are absorbed during \Abs-normalization of $r\sigma$ and $r\rho$, where $r\sigma
\approxOf{\Abs} t_1$ and $r\rho\approxOf{\Abs} t_2$; this implies that replacing occurrences of $z$ by $\varepsilon_f$ (for the appropriate absorption symbol $f$) within $r$ results in a more specific generalization $r'$. For the remainder of this proof, we can consider $r$ to be the generalization resulting from replacing all such variables in $R$ by the appropriate absorption constant $\varepsilon_f$.

\item     There exists a position $p\in \mathit{pos}(t_1)\cap\mathit{pos}(t_2)$ such that $z$ generalizes $s^*\triangleq t^*$ where $s^*=t_1\vert_p$ and  $t^* = t_2\vert_p$. Notice that $z$ is structurally smaller than $r$ and thus, by the IH, there exists a final configuration $\langle \emptyset; S^*; T^*; \theta^*\rangle\in \AU(\langle \{s^*\triangleq t^* \}; \emptyset; \emptyset; \iota\rangle)$ and $\tau^*\in\Psi(T^*, S^*)$ such that $z\leq x'\theta^*\tau^*$. We will use $\theta^*\tau^*$ to align the generalizations resulting from the IH.
   \end{enumerate}

    \hspace{1em}By the induction hypothesis, there exists a 
     final configuration $\langle \emptyset; S''; T'';$ $ \theta''\rangle\in \AU(\langle A'; S'; T'; \theta'\rangle)$ 
     and $\tau_i\in\Psi(T'',S'')$ such that $r_i\genOf{\Abs} y_i\theta''\tau_i$ where $1\leq i\leq n$. Note,
     we can choose the same configuration $\langle \emptyset; S''; T''; \theta''\rangle$ for all AUEs  
     $t'_i\triangleq t''_i$ labeled by $y_i$.  Furthermore, we can choose $\langle \emptyset; S''; T''; \theta''\rangle$ such that $S^*\subseteq S''$  and $T^*\subseteq T''$ modulo label renaming as $s^*$ and $t^*$ are subterms of $t_1$ and $t_2$, respectively, modulo absorption symbol introduction. Now, we define $\gamma_i$ as the substitution such that $r_i\gamma_i\approxOf{\Abs}y_i\theta''\tau_i$. By the above construction, we can safely assume for all $z\in \mathcal{V}(r_1)\cap\mathcal{V}(r_2)$ such that $z$ has not been replaced by an absorption constant, that $z\gamma_i\approxOf{\Abs} z\theta^*\tau^*$ as there exist AUEs corresponding to $S^*$ and $T^*$ in $S''$ and $T''$, respectively.

 \hspace{1em}Now let $\mu$ be a substitution and $r_i'$ ($1\leq i \leq n$) be terms  such that for all $1\leq i\leq n$,  $r_i= 
    r_i'\mu$ and $g(r_1',\ldots,r_n') \preceq_{\tiny{\Abs}} 
          g(y_1\theta'',\ldots, y_n\theta'')$. If $\mu$ is the identity substitution, then we are done. Otherwise, we can use $\mu$ to construct a $\tau\in\Psi(T'',S'')$. Additionally, we need to consider the  $\tau_i
          \in\Psi(T'',S'')$ derived above for each $r_i$, where $1\leq i\leq n$, and the corresponding substitutions $\gamma_i$. Thus, $r_i'\mu \genOf{\Abs} 
          y_i\theta''\tau_i$ and  $r_i'\mu\gamma_i\approxOf{\Abs}
          y_i\theta''\tau_i$. 
          
          \hspace{1em}Now let $\mu_i^1$ and $\mu_i^2$ be substitutions such that  $\mu\gamma_i = (\mu_i^1\mu_i^2)\vert_{\Dom(\mu\gamma_i)}$ and $r'_i\mu_i^1\approxOf{\Abs}y_i\theta''$. This is possible given our assumption that $g(r_1',\ldots,r_n') \preceq_{\tiny{\Abs}} 
          g(y_1\theta'',\cdots, y_n\theta'')$. Note that $r'_i\mu_i^1\approxOf{\Abs} y_i\theta''$ implies that for every $x\in  \Dom(\mu_i^2)$ there exists a $z\in \Dom(\tau_i)$ such that $z\tau_i\approx_{\tiny abs} x\mu_i^2$. 

          \hspace{1em}We now construct $\tau\in \Psi(T'',S'')$ using the $\mu_i^2$, that is for all $1\leq  j\leq n$ and $x\in \Dom(\mu_j^2)$ there exists a $z\in \Dom(\tau)$ such that  $z\tau\approxOf{\Abs} x\mu_j^2$. It now follows that $r_i \genOf{\Abs}  y_i\theta''\tau$ holds for all  $1\leq 
          i\leq n$ and thus we have shown that $g(r_1,\ldots,r_n)  
          \genOf{\Abs}  g(y_1,\cdots, y_n)\theta''\tau$.

      \item $r=f(r_1,r_2)$, where $f$ is an absorption symbol and, w.l.o.g,  $t_1=\varepsilon_f$ and $t_2=f(s_1,s_2)$. Then 
      from $\langle A; S; T; \theta\rangle$ we can derive a configuration  $\langle A'; S'; T'; \theta'\rangle$ using the \textit{ExpLA1} rule such that $\star\labTriEq{y_2} s_2\in T'$ and $\varepsilon_f\labTriEq{y_1} s_1\in A'$. Now let $\langle \emptyset; S''; T''; \theta''\rangle\in \AU(\langle A'; S'; T'; \theta'\rangle)$ be a final configuration. 

\hspace{1em}By the induction hypothesis we know that $r_1\preceq_{\tiny abs} y_1\theta''\tau_1$ for some $\tau_1\in \Psi(S'',T'')$. Let $\mu'$ be a substitution such that $r_1\mu'\approxOf{\tiny abs} y_1\theta''\tau_1$ and $R_2\subseteq \mathcal{V}(r)$ such that $R_2\cap \mathcal{V}(r_1) = \emptyset$. Using $R_2$ we define a bijective renaming $\nu$ such that for all  $z\in R_2$, $z\nu\not\in \mathcal{V}(r_1\mu')\cup \mathcal{V}(r_1)$. 

We will now consider the term $r\nu\mu'=f(r_1\mu',r_2\nu\mu')$. Note that for all variables $z\in \mathcal{V}(r_1)\cap \mathcal{V}(r_2\nu)$, it must be the case that $z\mu' \genOf{\Abs}  z\mu^*$ where $r_1\mu^* \approxOf{abs} s_1$ and $r_2\mu^* \approxOf{abs} s_2$. Thus, observe that $r_2\nu\mu'\genOf{\Abs}  s_2$.

Now let $\gamma'$ be a substitution such that $\Dom(\gamma')=\mathcal{V}(r_2\nu\mu')$, $r_2\nu
\mu'\gamma'\approxOf{abs} s_2$, and $r_1\mu'\gamma'\approxOf{\Abs} s_1$.  Now consider  $R_2'= \{z\mid z\in \Dom(\gamma')\wedge z\not \in  \mathcal{V}(r_1\mu')\}$ and $\nu' =\{ z\mapsto l\mid z\in R_2'\wedge z\gamma'=l\}$. Note that $r_2\nu\mu'\nu'\preceq_{abs} s_2$ and there exists $t^* \in \Abst_{y_2}(T'',S'')$ such that  $r_2\nu\mu'\nu'\approxOf{\Abs} t^*$ by the definition of the \textit{abstraction set}. For terms in $\Abst_{y_2}(T'',S'')$ we know how to build a $\tau_2\in \Psi(T'',S'')$.

Now let $\mu'_1$ and $\mu'_2$ be substitutions such that $r_1\mu'\approxOf{\Abs} r_1'\mu'_1\mu'_2$ and for all $z\in \Dom(\mu'_2)$ there exists $y\in \Dom(\tau_1)$ such that $z\mu'_2\approxOf{\Abs} y\tau_1$. Notice we can apply the same rewriting to $r_2\nu\mu'\nu'$, that is $r_2'\mu''_1\mu''_2\approxOf{\Abs}r_2\nu\mu'\nu'$. We are free to choose the  $\Dom(\nu')$ such that it does not compose with the range of $\mu'$. Thus for variables $z\in \mathcal{V}(r_1'\mu'_1) \cap \mathcal{V}(r_2'\mu''_1)$ such that $z\in \Dom(\mu_2'')$,  there exists $y\in \Dom(\tau_2)$ such that $z\mu''_2\approxOf{\Abs}y\tau_2$ and $z\mu'_2\approxOf{\Abs}y\tau_1$. We can safely assume that the $\Dom(\tau_2)\cap \mathcal{V}(Ran(\tau_1))= \emptyset$, thus we can choose $\tau\in \Psi(T'',S'')$ such that $\tau=\tau_1\tau_2$ as the required substitution. That is $r\genOf{\Abs}  f(y_1,y_2)\theta''\tau$. \qed

  \end{enumerate}

\end{proof}

Given the technical nature of Theorem~\ref{Thm:Completeness1}, we provide examples in Appendices ~\ref{ex:com1step} and \ref{ex:com2step} illustrating how to produce the generalizations required by the stepcase.

Finally, we remark why we do not merge abstractions in $T$. Note, for a final configuration $\langle \emptyset; S; T;\theta\rangle$, 
there does not exist AUEs $s_1\labTriEq{x} t_1,s_2\labTriEq{y} t_2\in S$, such that  $s_1=s_2$, $t_1=t_2$, and $y\not = x$. This property is not guaranteed for $T$ as we do not have a merge rule for AUEs in $T$. Adding such a rule to $\AU$ may lead to the incompleteness of the procedure:
\begin{example}
Consider the AUE  $g(\varepsilon_f,\varepsilon_f,f(a,h(\varepsilon_f)))\labTriEq{x} g(f(h(\varepsilon_f),a),f(h(\varepsilon_f),$ $a),\varepsilon_f)$, abbreviated as $s\labTriEq{x} t$. Applying $\AU$ to $s\labTriEq{x} t$ results in a set of final configurations containing a configuration where $S= \{\varepsilon_f\labTriEq{y}a,a\labTriEq{z}\varepsilon_f\}$, $T =\{\star\labTriEq{u}h(\varepsilon_f), \star\labTriEq{v}h(\varepsilon_f), h(\varepsilon_f)\labTriEq{w}\star\}$, and $\theta = \{ x\mapsto g(f( u,y),f(v,y),f(z,w))\}$. Furthermore, $\tau=\{u\mapsto h(f(z,y)), w\mapsto h(f(y,z)), v\mapsto h(\varepsilon_f)\}\in \Psi(T,S)$. The generalization induced by this choice of $\tau$ is $ g(f(h(f(z,y)),y),f(h(\varepsilon_f),y),f(z,$ $h(f(y,z))))$. Now consider $T' = \{\star\labTriEq{u}h(\varepsilon_f), h(\varepsilon_f)\labTriEq{w}\star\}$ and $\mu = \{ v\mapsto u\}$. Because $\tau$ maps $u$ and $v$ to different terms there does not exist $\tau'\in \Psi(T',S)$ such that $x\theta\tau\approxOf{\Abs}x\theta\mu\tau'$ as $\mu$ replaces occurrences of $v$ by $u$ in $\theta$ and $\tau'$ must replace occurrences of $\mu$ by the same term. Furthermore, substitution into $y$ and $z$ results in a non-generalization. 
\end{example}

\section{Anti-Unification Type}
\label{sec:AUtype}
This section shows that the complete set of generalizations produced by the $\AU$ algorithm is minimal. We do so by considering a further transformation of the set of final configurations and then show that generalizations constructable from this set of final configurations are incomparable. 
\begin{definition}[Merged configurations]
Let $s$ and $t$ be terms. We refer to  $\AU(\langle \{s\labTriEq{x} t\};\emptyset ; \emptyset; \iota\rangle )$ as \emph{merged} if for all $\langle \emptyset; S_0 ; T_0; \theta_0 \rangle,\langle \emptyset; S_1 ; T_1; \theta_1 \rangle\in \AU(\langle \{s\labTriEq{x} t\};\emptyset ; \emptyset; \iota\rangle )$, $s'\labTriEq{y_1}t'\in S_0$ and $s'\labTriEq{y_2}t'\in S_1$ iff $y_1 = y_2$.
\end{definition}
A merged set of final configurations can be obtained by an appropriate renaming of the store labels 
and applying this renaming to the final substitutions. 
\begin{lemma}
\label{lem:minvars}
    Let $s$ and $t$ be terms and  $\langle \emptyset; S ; T; \theta \rangle\in \AU(\langle \{s\labTriEq{x} t\};\emptyset ; \emptyset; \iota\rangle )$. 
    %
%
    Then for all $s'\labTriEq{y} t'\in S$ and any non-variable term $r$, $x\theta\{y\mapsto r\}\notin \mathcal{G}_{\tiny \Abs}(s,t)$.

\end{lemma}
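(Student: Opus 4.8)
The plan is to argue by contradiction through a head clash at the position occupied by $y$. Suppose some non-variable $r$ with $head(r)=g$ satisfied $x\theta\{y\mapsto r\}\in\mathcal{G}_{\Abs}(s,t)$, witnessed by substitutions $\sigma'$ and $\rho'$ with $x\theta\{y\mapsto r\}\sigma'\approxOf{\Abs}s$ and $x\theta\{y\mapsto r\}\rho'\approxOf{\Abs}t$. By the configuration invariants of Definition~\ref{def:configuration}, $y\in\RVar(\theta)$, so $y$ occupies a position $p$ in the generalization under consideration (when the wildcard siblings are filled by an abstraction substitution $\tau\in\Psi(T,S)$, the occurrences of the store label $y$ are unchanged). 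By Theorem~\ref{Thm:sound} together with Definition~\ref{def:subsaue} we have $y\sigma_{\mathcal D}=s'$ and $y\rho_{\mathcal D}=t'$, and since $s'\labTriEq{y}t'$ is solved we know $head(s')\neq head(t')$, these heads are not related absorption symbols, and the AUE is not wild.

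The crux is a \emph{rigidity} property of the position $p$: any substitution turning $x\theta\{y\mapsto r\}$ into $s$ (resp. $t$) modulo $\Abs$ must reproduce $s'$ (resp. $t'$) at $p$, that is, $r\sigma'\approxOf{\Abs}s'$ and $r\rho'\approxOf{\Abs}t'$. I would establish this by induction on the derivation $\langle\{s\labTriEq{x}t\};\emptyset;\emptyset;\iota\rangle\Longrightarrow^{*}\langle\emptyset;S;T;\theta\rangle$, maintaining the invariant that along the path from the root to $p$ every sibling subterm of the generalization is a fixed, $\Abs$-normal term that is not the absorption constant of the symbol directly above it. Hence no ancestor of $p$ can collapse ``for free'', and the subterm at $p$ must survive $\Abs$-normalization on both sides down to the genuine subterms $s'$ and $t'$. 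The normal-form condition on configurations (an absorption constant never sits under its own symbol) and the fact that the wildcard positions have been resolved by terms of $\Abst_y(T,S)$ are precisely what close the otherwise-available absorption ``escape hatch'' that would spoil rigidity.

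Granting rigidity, the conclusion follows from a short head analysis. From $r\sigma'\approxOf{\Abs}s'$ we get either $head(s')=g$ (when $r\sigma'$ is collapse-free) or, if $r\sigma'$ $\Abs$-reduces to $\varepsilon_g$, then $s'=\varepsilon_g$, so $head(s')=\varepsilon_g$ and $g$ is an absorption symbol. Symmetrically, $head(t')=g$ or $head(t')=\varepsilon_g$. Matching the combinations against solvedness:
\begin{enumerate}
  \item $head(s')=g=head(t')$ contradicts $head(s')\neq head(t')$;
  \item $head(s')=g$ and $head(t')=\varepsilon_g$ (or its mirror) force $\{head(s'),head(t')\}=\{g,\varepsilon_g\}$, i.e. related absorption symbols, contradicting solvedness;
  \item $head(s')=\varepsilon_g=head(t')$ again contradicts $head(s')\neq head(t')$.
\end{enumerate}
In every case we obtain a contradiction, so no such non-variable $r$ exists and $x\theta\{y\mapsto r\}\notin\mathcal{G}_{\Abs}(s,t)$.

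I expect the rigidity step to be the main obstacle. In an absorption theory a context can silently discard a subterm, so it is not a priori clear that the store variable's slot must be matched exactly; the proof must show that the store/abstraction bookkeeping leaves no free absorbing position surrounding $p$ under the \emph{new} substitutions $\sigma'$ and $\rho'$. Once rigidity is secured, the four-way head analysis above is routine, and it already subsumes the degenerate subcases where $s'$ or $t'$ is a constant and where $r$ is taken to be an absorption constant.
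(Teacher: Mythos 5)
Your overall skeleton is the same as the paper's: the paper's entire proof is the one-line observation that $r$ replaces $y$, ``which was a generalization of $s'$ and $t'$,'' so $head(r)$ clashes with $head(s')$, $head(t')$, or both. Your four-way head analysis (allowing $r\sigma'$ or $r\rho'$ to collapse to $\varepsilon_g$) is actually more careful than the paper's, and it is correct \emph{granting} rigidity. The difference is that you explicitly isolate the rigidity claim that the paper silently assumes.

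The gap sits exactly where you predicted, and your proposed repair does not close it. The invariant you want to maintain --- every sibling along the path from the root to $p$ is a \emph{fixed} $\Abs$-normal term that is not the absorption constant of the symbol above it --- is false for $x\theta$: siblings can be, or contain, labels of $T$ and $S$, i.e.\ free variables of the generalization, and the adversarial witnesses $\sigma',\rho'$ may map them to absorption constants. Concretely, in Configuration 1 of Example~\ref{ex:procedure1} one has $x\theta=f(y,z)$ with $\varepsilon_f\labTriEq{y}a\in S$ and $\star\labTriEq{z}f(h(a),b)\in T$; taking the non-variable $r=a$, the term $f(a,z)$ \emph{is} an $\Abs$-generalization of $\varepsilon_f$ and $f(a,f(h(a),b))$, witnessed by $\{z\mapsto\varepsilon_f\}$ and $\{z\mapsto f(h(a),b)\}$. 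So rigidity for raw $x\theta$ --- and with it the literal statement you set out to contradict --- fails: the $y$-slot need not be matched at all on the left-hand side, because a $T$-label sibling opens the absorption escape hatch. The claim is only tenable (and is only used, in Lemma~\ref{lem:incompar}) for the instantiated generalizations $x\theta\tau$ with $\tau\in\Psi(T,S)$, after the $T$-labels have been replaced by abstraction-set terms; your parenthetical about ``filling the wildcard siblings'' shows you sensed this, but your proof then manipulates $x\theta\{y\mapsto r\}$ itself, and the induction you sketch never invokes $\tau$. To genuinely close the hatch you would have to show that no subterm of $x\theta\tau$ flanking the path to $p$ can be instantiated to the relevant absorption constant by $\sigma'$ or $\rho'$, and this is non-trivial: abstraction-set terms may themselves place store labels under absorption symbols (e.g.\ $f(u,b)\in\Abst(f(a,b),\{u\mapsto a\})$), so the needed property is a genuine lemma about $\Abst_u(T,S)$ --- one that neither your sketch nor, for that matter, the paper's own proof establishes.
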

\vspace{-1em}
\begin{proof}
Given that $s'\labTriEq{y} t'\in S$, we know that  $head(s')\not = head(t')$ and, $head(s')$ and  $head(t')$ are not related absorption symbols. In $x\theta\{y\mapsto r\}$, the non-variable term $r$ replaces $y$ which was a generalization of $s'$ and $t'$, but by this replacement, $head(r)$ will clash with $head(s')$, $head(t')$, or both. Hence, it cannot be a generalization of $s'$ and $t'$, which implies  $x\theta\{y\mapsto r\}\notin \mathcal{G}_{\tiny \Abs}(s,t)$\qed
\end{proof}
\begin{definition}
Let $s$ and $t$ be terms and $\AU(\langle \{s\labTriEq{x} t\};\emptyset ; \emptyset; \iota\rangle)$ merged. We define the set $\mathcal{C}_{\AU}(s,t)$ as  $\mathcal{C}_{\AU}(s,t) = \{ x\theta\tau\mid  \langle \emptyset; S ; T; \theta \rangle\in \AU(\langle \{s\labTriEq{x} t\};\emptyset ; \emptyset; \iota\rangle ) \wedge \tau\in\Psi(T,S)\}.$
\end{definition}
\begin{lemma}
\label{lem:completenessCset}
For any  $s,t$, $\mathcal{C}_{\AU}(s,t)$ is their complete set of $\Abs$-ge\-ne\-ra\-li\-zations.
\end{lemma}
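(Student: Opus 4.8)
<br>

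The goal is to show that $\mathcal{C}_{\AU}(s,t)$ is a complete set of $\Abs$-generalizations, which by Definition~\ref{def:mcsg}(1) requires two things: first, that every element of $\mathcal{C}_{\AU}(s,t)$ is actually an $\Abs$-generalization of $s$ and $t$ (soundness of the set), and second, that for every $r\in\mathcal{G}_{\Abs}(s,t)$ there exists some $g\in\mathcal{C}_{\AU}(s,t)$ with $r\genOf{\Abs}g$ (completeness in the covering sense). The plan is to assemble this lemma almost entirely from the soundness and completeness theorems already proved, since the real technical content has been discharged there.

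For the soundness direction, I would take an arbitrary $x\theta\tau\in\mathcal{C}_{\AU}(s,t)$, so that it arises from some final configuration $\langle\emptyset;S;T;\theta\rangle\in\AU(\langle\{s\labTriEq{x}t\};\emptyset;\emptyset;\iota\rangle)$ together with $\tau\in\Psi(T,S)$. Since $s\labTriEq{x}t$ is the initial (and only) AUE, it lies in $A_0$ of the starting configuration, so Theorem~\ref{thm:snd_tau} applies directly and gives $x\theta\tau\in\mathcal{G}_{\Abs}(s,t)$. Thus $\mathcal{C}_{\AU}(s,t)\subseteq\mathcal{G}_{\Abs}(s,t)$, establishing the first requirement with no further work.

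For the covering direction, I would fix an arbitrary $r\in\mathcal{G}_{\Abs}(s,t)$ and apply the Completeness Theorem (Theorem~\ref{Thm:Completeness1}) to the initial configuration $\langle\{s\labTriEq{x}t\};\emptyset;\emptyset;\iota\rangle$, in which $s\labTriEq{x}t$ indeed occurs in the unsolved set $A$. The theorem hands us a final configuration $\langle\emptyset;S';T';\theta'\rangle\in\AU(\langle\{s\labTriEq{x}t\};\emptyset;\emptyset;\iota\rangle)$ and a substitution $\tau\in\Psi(T',S')$ with $r\genOf{\Abs}x\theta'\tau$. By construction the witness $x\theta'\tau$ is exactly an element of $\mathcal{C}_{\AU}(s,t)$, so $r$ is generalized by a member of the set, which is precisely condition (1) of Definition~\ref{def:mcsg}.

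The one point that needs a word of care — and which I expect to be the only genuine obstacle — is the hypothesis that $\AU(\langle\{s\labTriEq{x}t\};\emptyset;\emptyset;\iota\rangle)$ be \emph{merged}, since $\mathcal{C}_{\AU}(s,t)$ is defined only under that assumption. I would note that merging is achieved purely by a renaming of store labels with the induced renaming applied to the final substitutions (as remarked after the definition of merged configurations), and that such a renaming is a bijective variable renaming that preserves both the $\genOf{\Abs}$ relation and membership in $\mathcal{G}_{\Abs}(s,t)$. Consequently neither the soundness inclusion nor the covering property is disturbed by passing to the merged presentation, and the two directions together yield that $\mathcal{C}_{\AU}(s,t)$ is a complete set of $\Abs$-generalizations of $s$ and $t$.
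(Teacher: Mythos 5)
Your proof is correct and follows essentially the same route as the paper, whose proof is simply the one-line observation that the lemma follows from the completeness of $\AU$ (Theorem~\ref{Thm:Completeness1}). Your additional explicit checks --- soundness of the elements of $\mathcal{C}_{\AU}(s,t)$ via Theorem~\ref{thm:snd_tau} and the harmlessness of the merging renaming --- are exactly the ingredients the paper leaves implicit, so this is a welcome elaboration rather than a different argument.
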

\vspace{-1em}
\begin{proof}
    The lemma follows from completeness of $\AU$ (Theorem~\ref{Thm:Completeness1}). \qed
\end{proof}
\begin{lemma}
\label{lem:incompar}
   For all terms  $s,t$, and $g_0,g_1\in \mathcal{C}_{\AU}(s,t)$, if $g_0\neq g_1$ then neither $g_0 \preceq_{\tiny{\Abs}} g_1$ nor $g_1 \preceq_{\tiny{\Abs}} g_0$ holds. 
\end{lemma}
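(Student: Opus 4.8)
The plan is to prove Lemma~\ref{lem:incompar} by establishing that two distinct generalizations from $\mathcal{C}_{\AU}(s,t)$ are incomparable, distinguishing between two sources of distinctness: either the generalizations come from the \emph{same} final configuration (differing only in the choice of abstraction substitution $\tau$), or they come from \emph{different} final configurations. I would first reduce to these two cases and handle them separately, since the underlying obstructions to comparability are different in nature.

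\textbf{Same configuration.} Suppose $g_0 = x\theta\tau_0$ and $g_1 = x\theta\tau_1$ arise from the same $\langle \emptyset; S; T; \theta\rangle$ with $\tau_0,\tau_1\in\Psi(T,S)$ and $g_0\neq g_1$. Since $\tau_0$ and $\tau_1$ share the same domain $\labels{T}$ and differ, there is some $y\in\labels{T}$ with $y\tau_0\neq y\tau_1$. The key structural fact, inherited from Definition~\ref{def:abstraction} and Corollary~\ref{cor:rangebnd}, is that each $y\tau_i$ is an $\Abs$-normal-form abstraction of a fixed term whose only variables lie in $\labels{S}$, and crucially these variables are positions where $s$ and $t$ have clashing (non-generalizable, non-absorption) heads by Lemma~\ref{lem:minvars}. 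I would argue that if $g_0\preceq_{\Abs}g_1$ via some $\delta$, then $\delta$ would have to transform $y\tau_0$ into $y\tau_1$ at the corresponding position \emph{without} touching the store variables (they cannot be instantiated to non-variables, again by Lemma~\ref{lem:minvars}); but the abstraction set is built so that distinct abstractions differ precisely in which store-variable occurrences are ``folded back'' versus expanded, and no single substitution can simultaneously realize one abstraction from another in both directions. This asymmetry between $\tau_0$ and $\tau_1$ forces incomparability.

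\textbf{Different configurations.} Here I would exploit that the final configurations produced by $\AU$ encode genuinely different ways of using the absorption axioms, reflected in their stores $S$ and their substitution skeletons $x\theta$. Because the set is \emph{merged}, a solved AUE $s'\labTriEq{y}t'$ appearing in two stores carries the same label, so the skeletons $x\theta_0$ and $x\theta_1$ place the clash-positions (store variables) at structurally distinct locations whenever the configurations differ. Combined with Lemma~\ref{lem:minvars} — which pins down that store variables cannot be instantiated away — any attempted comparison $g_0\preceq_{\Abs}g_1$ would require mapping a store variable of $g_0$ onto incompatible structure in $g_1$, producing a head clash under $\Abs$-normalization. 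I would make this precise by locating a position where the two skeletons disagree in head symbol or in the solved/wild status of the subterm, and showing the required instance cannot exist in either direction.

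\textbf{Expected main obstacle.} The hard part will be the same-configuration case, specifically the bookkeeping needed to show that two distinct abstraction substitutions yield incomparable terms. The abstraction sets can be infinite and richly structured (as Examples~\ref{ex:abstraction1} and \ref{ex:procedure2} illustrate), so I must carefully characterize when one element of an abstraction set generalizes another, and argue this never happens for distinct elements once the store variables are frozen. The cleanest route is likely an auxiliary claim: for a fixed $t$ and $\sigma$ with $\Ran(\sigma)$ in normal form, distinct members of $\Abst(t,\sigma)$ are pairwise $\preceq_{\Abs}$-incomparable when the variables of $\Dom(\sigma)$ are treated as constants (which Lemma~\ref{lem:minvars} justifies). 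Establishing this claim by induction on term structure, and then lifting it through $x\theta$, is where the real work lies; the different-configuration case should follow more directly from the merged property and the head-clash argument.
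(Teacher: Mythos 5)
Your proposal assembles exactly the three ingredients the paper's proof uses---Corollary~\ref{cor:rangebnd} (every variable of a computed generalization is a store label), Lemma~\ref{lem:minvars} (store labels cannot be instantiated by non-variable terms), and the merged-labels property---but you organize them into a two-case analysis (same vs.\ different final configuration) with an auxiliary claim about abstraction sets, whereas the paper gives one short uniform argument with no case split at all: if $g_0\delta\approxOf{\Abs} g_1$, then $\delta$ must map each variable of $g_0$ either to a non-variable term (excluded by Lemma~\ref{lem:minvars}), or to a different label in $\labels{S_0}\cup\labels{S_1}$ (excluded because merging makes each label name a unique solved AUE, so distinct labels generalize distinct clashing pairs), or to a fresh variable (excluded because $\mathcal{V}(g_1)\subseteq\labels{S_1}$); the only remaining option is that $\delta$ fixes $\mathcal{V}(g_0)$, whence $g_0$ and $g_1$ are $\approxOf{\Abs}$-equal normal forms and hence syntactically equal, contradicting $g_0\neq g_1$. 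In particular, you substantially overestimate where the work lies: once the store variables are ``frozen'' in this sense, \emph{any} two distinct normal-form terms over frozen variables are incomparable, because a substitution that cannot move any variable of $g_0$ acts as the identity on $g_0$. Your proposed induction characterizing when one member of $\Abst(t,\sigma)$ generalizes another is therefore unnecessary---with $\Dom(\sigma)$ treated as constants the claim is immediate, and the genuine content of the lemma is the freezing argument itself, not the structure of abstraction sets.

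Two soft spots in your sketch need repair, though both are fixable with ingredients you already cite. First, in the same-configuration case you exclude only non-variable instantiations of store variables; you must also exclude variable-to-variable mappings, i.e., renaming one store label to another (distinct labels in a final store name distinct AUEs, since Merge has been applied exhaustively, and the merged property extends this across configurations) and renaming to a fresh variable (impossible, since the variables of the target generalization lie in the store labels). This is precisely the argument you deploy only in the different-configuration case; it is needed in both, and indeed this is why the paper's proof does not split cases. Second, your plan to ``locate a position where the two skeletons disagree in head symbol'' is not in general available: different final configurations can produce syntactically identical skeletons that differ only in which variables are store labels versus abstraction labels---Configurations 3 and 4 of Example~\ref{ex:procedure2} are such a pair, both with skeleton $g(f(u_1,v_1),f(u_2,v_2))$. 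Your hedge about ``solved/wild status'' rescues the plan, but the positional analysis then collapses back into the variable-freezing argument anyway, which is what the paper uses directly.
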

\vspace{-1em}
\begin{proof}
By Corollary~\ref{cor:rangebnd},  $\mathcal{V}(g_0) \subseteq \mathit{label}(S_0)$ and $\mathcal{V}(g_1)\subseteq \mathit{label}(S_1)$ for some final configurations $\langle \emptyset; S_0 ; T_0; \theta_0 \rangle,\langle \emptyset; S_1 ; T_1; \theta_1 \rangle$ $\in \AU(\langle \{s\labTriEq{x} t\};\emptyset ; \emptyset; \iota\rangle )$. 
By Lemma~\ref{lem:minvars}, w.l.o.g., for $x\in \mathit{label}(S_0)$ we have $g_0\{x\mapsto r\}\not\in \mathcal{G}_{\tiny \Abs}(s,t)$ when $r$ is not a variable. If $r$ is a variable and $r\in \mathit{label}(S_0)\cup \mathit{label}(S_1)$, then $g_0\{x\mapsto r\}\not\in \mathcal{G}_{\tiny \Abs}(s,t)$ because labels in $\mathit{label}(S_0)\cup \mathit{label}(S_1)$ are assigned to unique AUEs (due to merging the results of $\AU$) and thus $x$ and $r$ generalize different terms. Thus,  $r\not \in \mathit{label}(S_0)\cup \mathit{label}(S_1)$ implying neither $g_0 \preceq_{\tiny{\Abs}} g_1$ nor $g_1 \preceq_{\tiny{\Abs}} g_0$ hold.  \qed
\end{proof}
\begin{theorem}
\label{thm:minimality}
   For all terms $s,t$, $\mathcal{C}_{\AU}(s,t)$ is actually $\mcsgabs(s,t)$.
\end{theorem}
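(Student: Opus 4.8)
The plan is to show that $\mathcal{C}_{\AU}(s,t)$ satisfies the two defining conditions of a minimal complete set of $\Abs$-generalizations in Definition~\ref{def:mcsg}, and then to conclude by the standard uniqueness of the mcsg up to $\simeq_{\Abs}$ that $\mathcal{C}_{\AU}(s,t) = \mcsgabs(s,t)$. The three supporting results assembled in this section do essentially all of the work, so the argument is a synthesis rather than a fresh construction.

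First I would record that $\mathcal{C}_{\AU}(s,t) \subseteq \mathcal{G}_{\Abs}(s,t)$, so that the candidate set genuinely consists of generalizations: every element is of the form $x\theta\tau$ with $\langle \emptyset; S; T; \theta\rangle$ a final configuration in $\AU(\langle\{s\labTriEq{x} t\};\emptyset;\emptyset;\iota\rangle)$ and $\tau\in\Psi(T,S)$, and Theorem~\ref{thm:snd_tau} guarantees that such terms lie in $\mathcal{G}_{\Abs}(s,t)$. Next I would verify condition~(1) of Definition~\ref{def:mcsg}, completeness, which is precisely Lemma~\ref{lem:completenessCset}: for every $r\in\mathcal{G}_{\Abs}(s,t)$ there is $g\in\mathcal{C}_{\AU}(s,t)$ with $r\genOf{\Abs} g$, the witness $g$ being the term $x\theta'\tau$ produced by the completeness Theorem~\ref{Thm:Completeness1}. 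Finally I would verify condition~(2), minimality, by invoking the contrapositive of Lemma~\ref{lem:incompar}: if $g_0,g_1\in\mathcal{C}_{\AU}(s,t)$ and $g_0\genOf{\Abs} g_1$, then $g_0$ and $g_1$ are comparable, so by incomparability $g_0=g_1$. Conditions~(1) and~(2) together make $\mathcal{C}_{\AU}(s,t)$ a minimal complete set, hence an mcsg.

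The step that carries the real weight—and where the only genuine subtlety lies—is arguing that ``complete plus pairwise incomparable'' is exactly the content of Definition~\ref{def:mcsg}, and that this pins down $\mcsgabs(s,t)$ rather than merely some minimal complete set. Here I would be careful about the direction of $\genOf{\Abs}$: the mcsg collects the least general (most specific) generalizations, and completeness places every generalization above some member, so the witness direction in Lemma~\ref{lem:completenessCset} matches condition~(1) as written. For the passage from \emph{a} minimal complete set to \emph{the} minimal complete set I would appeal to the uniqueness of the mcsg up to $\simeq_{\Abs}$, observing that Lemma~\ref{lem:incompar} also forbids two distinct members from being $\simeq_{\Abs}$-equivalent (equivalence would make them comparable in both directions), so $\mathcal{C}_{\AU}(s,t)$ carries no redundant representatives. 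All remaining bookkeeping—in particular that merging the set of final configurations only renames store labels and therefore does not disturb the computed generalizations—has already been discharged before the statement, so no further argument is needed there.
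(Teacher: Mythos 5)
Your proposal is correct and follows essentially the same route as the paper's own proof, which likewise establishes condition~(1) of Definition~\ref{def:mcsg} via Lemma~\ref{lem:completenessCset} and condition~(2) via Lemma~\ref{lem:incompar}. The extra remarks you add---checking the direction of $\genOf{\Abs}$ and noting that Lemma~\ref{lem:incompar} also rules out distinct $\simeq_{\Abs}$-equivalent members, giving uniqueness of the mcsg up to equivalence---are sound refinements of the same two-lemma synthesis rather than a different argument.
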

\begin{proof}
    Completeness is shown in Lemma~\ref{lem:completenessCset}. Minimality follows from Lemma~\ref{lem:incompar} and Definition~\ref{def:mcsg}.\qed
\end{proof}
\begin{corollary}
    Anti-unification modulo \Abs\ theories is of type infinitary. 
\end{corollary}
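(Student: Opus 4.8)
The plan is to verify directly the two clauses of the \emph{infinitary} case in Definition~\ref{def:typemcsg}: that $\mcsgabs(s,t)$ exists for every pair $s,t$, and that some pair admits an infinite minimal complete set. The existence clause is essentially already done. By Theorem~\ref{Thm:Ter}, $\AU(\langle \{s\labTriEq{x} t\};\emptyset;\emptyset;\iota\rangle)$ is a finite, finitely computable set of final configurations, and for each such configuration the associated substitution set $\Psi(T,S)$ and the family $\{x\theta\tau\}$ are well defined; Theorem~\ref{thm:minimality} then identifies $\mathcal{C}_{\AU}(s,t)$ with $\mcsgabs(s,t)$ for \emph{all} $s,t$. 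Hence existence holds unconditionally, and the whole corollary reduces to exhibiting a single witness whose minimal complete set is infinite.

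For the witness I would reuse Example~\ref{ex:procedure2}, taking $s=g(\varepsilon_f,f(a,h(\varepsilon_f)))$ and $t=g(f(h(\varepsilon_f),a),\varepsilon_f)$. There, one of the final configurations (Configuration~1) leaves a store containing an AUE with an absorption constant, so that $\sigma_{\mathcal{D}}$ sends a store label to $\varepsilon_f$; concretely $\Abst_{v_2}(T,S)=\Abst(h(\varepsilon_f),\{u_1\mapsto\varepsilon_f,u_2\mapsto a\})$. This is the general mechanism producing infinitude: by Definition~\ref{def:abstraction}, whenever a variable is mapped to $\varepsilon_f$, the argument $\varepsilon_f$ of the non-wild term may be abstracted by arbitrarily deep $\Abs$-normal nestings $f(u_1,\cdot)$ placed in an absorbing position, all of which collapse to $\varepsilon_f$ under the relevant substitution. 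Thus $\Abst_{v_2}(T,S)$, and therefore $\Psi(T,S)$, is infinite, and the family $\{x\theta\tau\mid\tau\in\Psi(T,S)\}\subseteq\mathcal{C}_{\AU}(s,t)$ is infinite as well.

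It remains to argue that these infinitely many terms are genuinely distinct members of the minimal set rather than collapsing under $\approxOf{\Abs}$ or subsuming one another. Distinctness follows because the abstraction-set members (e.g.\ $h(\varepsilon_f), h(u_1), h(f(u_1,a)),\dots$) are pairwise distinct $\Abs$-normal forms occurring at the unique position labelled $v_2$ in $x\theta$, so the induced generalizations are pairwise $\Abs$-inequivalent by uniqueness of absorption normal forms; incomparability of any two distinct elements of $\mathcal{C}_{\AU}(s,t)$ is exactly Lemma~\ref{lem:incompar}, whose variable bound (Corollary~\ref{cor:rangebnd}) guarantees the generalizations live over $\labels{S}$. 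Combining infinitely many distinct elements with pairwise incomparability yields an infinite $\mcsgabs(s,t)$ via Theorem~\ref{thm:minimality}, which establishes the second clause and hence the corollary. The only delicate point I anticipate is the distinctness step: one must be certain that different abstraction substitutions really produce non-equivalent generalizations, so that $|\mcsgabs(s,t)|$ is truly infinite and not merely the cardinality of $\Psi(T,S)$ overcounting a finite quotient; this, however, is controlled entirely by the $\Abs$-normal-form discipline on configurations together with Corollary~\ref{cor:rangebnd}, so no new machinery is needed beyond results already available.
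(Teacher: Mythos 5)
Your proof is correct and takes essentially the same route as the paper, whose entire argument is to invoke Theorem~\ref{thm:minimality} together with Example~\ref{ex:procedure2}, observing that Configuration~1 there produces infinitely many generalizations. Your extra care on the distinctness of the infinitely many computed terms (via uniqueness of $\Abs$-normal forms and Lemma~\ref{lem:incompar}) merely makes explicit a step the paper leaves implicit.
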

\vspace{-1em}
\begin{proof}
By Theorem~\ref{thm:minimality}, the set of $\Abs$-generalizations computed in Example~\ref{ex:procedure2} is an mcsg, which is infinite since \textbf{Configuration 1} produces infinitely many. \qed
\end{proof}
\vspace{-.3em}

Theorem~\ref{thm:minimality} is in contrast to other known infinitary anti-unification problems such as idempotent anti-unification~\cite{DBLP:journals/tocl/CernaK20} where the algorithm produces a finitely representable complete set of generalizations which we can minimize (that is, the set may contain non-minimal generalizations that are cleaned during minimization). In our case, \AU\ directly gives a finitely represented mcsg.
\vspace{-1em}
\section{Computing Linear $\Abs$-Generalizations}
\label{sect:linear}
Linear generalizations do not contain any generalization variable more than once. To consider such a (practically useful) variant, we should drop the two sources of the duplication of generalization variables: the Merge rule and the computation of abstraction substitutions. Instead of the latter, we should replace $x$ with $s$ in the computed generalizations for each $\star \labTriEq{x} s\in T$ and $s \labTriEq{x} \star\in T$, 
because that gives the most specific generalization when our goal is to avoid variable duplication. These lead to the following observations:

\begin{itemize}
    \item generalizations computed in this way form a minimal complete set of linear $\Abs$-generalizations, which is finite (i.e., the linear variant is finitary);
    \item its cardinality bound is $O(2^{n_s})$ where $n_s$ is the number of occurrences of absorption symbols in the input; the exponential bound is caused by branching when we encounter absorption constant/symbol pairs in the AUEs to be generalized;
    \item each linear $\Abs$-generalization is computed in $O(n)$ steps where $n$ is the input size: it is obvious since at each step (rule application) of the algorithm, the number of symbols in the unsolved AUEs strictly decreases. 
\end{itemize}
\begin{example}
    \label{ex:procedure:linear} Recall the AU problem for $g(\varepsilon_f,f(a,h(\varepsilon_f)))$ and $g(f(h(\varepsilon_f),a),\varepsilon_f)$ from Example~\ref{ex:procedure2}. To compute linear $\Abs$-generalizations for these terms by the above-stated modification of \AU, we still get the same four configurations (note that the Merge rule was not applicable), but each of them produces now a single linear $\Abs$-generalization: 
    \begin{itemize}
      \item $g(f (u_1, a)), f (u_2, h(\varepsilon_f)))$ \quad (from configuration 1),
      \item $g(f (u_1, a)), f (a, v_2))$ \quad (from configuration 2), 
      \item $g(f(h(\varepsilon_f),v_1)),f(u_2,h(\varepsilon_f)))$ \quad (from configuration 3), and 
      \item $g(f(h(\varepsilon_f),v_1)),f(a,v_2))$ \quad (from configuration 4).
    \end{itemize}

    These four terms form the minimal complete set of linear $\Abs$-generalizations of $g(\varepsilon_f,f(a,h(\varepsilon_f)))$ and $g(f(h(\varepsilon_f),a),\varepsilon_f)$.

\end{example}

\section{Conclusion}
We introduced a rule-based algorithm that computes generalizations for problems modulo absorption operators and proved that it is sound and complete. Furthermore, the algorithm finitely computes a finite set of final configurations from which we can extract a minimal complete set of generalizations. This set can be infinite for some input, implying that $\Abs$-anti-unification is of type infinitary. We also considered the linear case, discussed the necessary modifications to the algorithm, and showed that the type reduces to finitary.

For future work, we will consider further improvements of the algorithmic techniques for generating minimal complete sets of generalizations and how to combine our algorithm with algorithms for computing generalizations over other equational theories, similar to the analysis performed in~\cite{DBLP:journals/amai/AlpuenteEMS22}. One considered improvement to the algorithm would be a grammatical representation of the computed mcsg. Additionally, we plan to consider how such equational theories can be used in practice as part of methods for software analysis.

\vspace{-2mm}

\subsubsection*{Acknowledgments.} 
This work was supported by the Czech Science Foundation Grant No. 22-06414L; the Austrian Science Fund (FWF) project P 35530; Cost Action CA20111 EuroProofNet; the Brazilian agency CNPq, Grant Universal 409003/21-2, and RG 313290/21-0; and the Brazilian Federal District Research Foundation  FAPDF, Grant DE 00193-00001175/2021-11.  The Brazilian Higher Education Council (CAPES) supported the Brazilian-Austrian cooperation through the program PrInt.
\bibliographystyle{plain}
\bibliography{ref}
\appendix

\section{Stepcase, Case 1 Example}
\label{ex:com1step}
 The following examples walk through the steps presented in the proof of Theorem~\ref{Thm:Completeness1}, \textbf{Stepcase}, case 1.
\begin{example}
\label{ex:compl1}
     Let us consider the AUE $t_1 \labTriEq{x} t_2$ presented in Example~\ref{ex:procedure1} where
    $t_1=g(\varepsilon_f,f(a,h(\varepsilon_f))),$ $ t_2=g(f(h(\varepsilon_f),a),\varepsilon_f),$
and the final configuration 
\begin{align*}
    \langle \emptyset; \{\varepsilon_f\labTriEq{v_1} a, a\labTriEq{u_2} \varepsilon_f\};\{\star\labTriEq{u_1} h(\varepsilon_f),h(\varepsilon_f)\labTriEq{v_2}\! \star\}; \{x\mapsto g(f(u_1,v_1),f(u_2,v_2))\}\rangle
\end{align*}
    derived from this AUE. Also, the term $r$ is a generalization of the above AUE:
    $$r=g(\overbrace{z}^{r_1},\overbrace{f(y,h(z)))}^{r_2}=g(\overbrace{w_1}^{r_1'},\overbrace{f(w_2,w_3)}^{r_2'})\overbrace{\{w_1\mapsto z, w_2 \mapsto y, w_3\mapsto h(z)\}}^{\mu} $$
    Notice that $z\in \mathcal{V}(r_1)\cap\mathcal{V}(r_2)$ and there is a position $1\in pos(t_1)\cap pos(t_2)$ such that $z$ generalizes $t_1\vert_1=\varepsilon_f$ and $t_2\vert_1=f(h(\varepsilon_f),a)$. 
    
    Consider the final configuration $\langle \emptyset;\{\varepsilon_f\labTriEq{y'} a\};\{\star\labTriEq{z'}h(\varepsilon_f)\};\{x'\mapsto f(y',z')\} \rangle $ $\in \AU(\langle \{ \varepsilon_f\labTriEq{x'}f(h(\varepsilon_f),a)\};$ $\emptyset;\emptyset; \theta_z \rangle)$. A possible choice for $\tau^*$ is $\tau^*=\{y'\mapsto h(\varepsilon_f)\}$, and $x'\{x'\mapsto f(y',z')\}\tau^*=f(h(\varepsilon_f),z')$. 
And we have:
\[\{x\mapsto g(f(u_1,v_1),f(u_2,v_2))\} =\{x\mapsto g(y_1,y_2)\} \overbrace{\{y_1\mapsto f(u_1,v_1),y_2\mapsto f(u_2,v_2)\}}^{\theta''},\]
and, in particular,  $g(w_1,f(w_2,w_3))\genOf{\Abs}  g(y_1,y_2)\theta''$.  Notice that the following statements hold: 

    $$r_1'\mu \genOf{\Abs}  y_1\theta''\overbrace{\{u_1\mapsto  h(\varepsilon_f)\}}^{\tau_1}=f(h(\varepsilon_f),v_1)$$
   $$r_2'\mu \genOf{\Abs}  y_2\theta''\overbrace{\{v_2\mapsto h(f(h(\varepsilon_f),v_1))\}}^{\tau_2}=f(u_2,h(f(h(\varepsilon_f),v_1)))$$
    where  $\gamma_1 = \{z\mapsto f(h(\varepsilon_f),v_1)\}$ and  $\gamma_2 = \{y\mapsto u_2, z\mapsto f(h(\varepsilon_f),v_1)\}$ realize these comparisons, i.e. $r_1'\mu \gamma_1\approx_{\tiny\Abs} y_1\theta''\tau_1$ and $r_2'\mu \gamma_2\approx_{\tiny\Abs} y_2\theta''\tau_2$. Hence, we care only about $(\mu\gamma_1) \vert_{\mathcal{V}(r'_1)}$ and  $(\mu\gamma_2)\vert_{\mathcal{V}(r'_1)}$, which can be rewritten as follows 
    
    \scalebox{.88}{\begin{minipage}{\textwidth}$$\mu\gamma_1\vert_{\mathcal{V}(r'_1)} = (\overbrace{\{ w_1\mapsto f(q_1,v_1)\}}^{\mu_1^1} \overbrace{\{q_1\mapsto h(\varepsilon_f)\}}^{\mu_1^2})\mid_{\mathcal{V}(r'_1)},$$ $$
    \mu\gamma_2\mid_{\mathcal{V}(r'_2)} = (\overbrace{\{ w_2\mapsto u_2, w_3\mapsto q_2\}}^{\mu_2^1}\overbrace{\{q_2\mapsto h(f(h(\varepsilon_f),v_1))\}}^{\mu_2^2})\mid_{\mathcal{V}(r'_2)}.$$
   \end{minipage}}
   \vspace{1em}
   
   \noindent We define $\tau$ such that $u_1\tau=q_1\mu_1^2$ and $v_2\tau=q_2\mu_2^2$, that is
   $$\tau = \{ u_1\mapsto h(\varepsilon_f), v_2\mapsto h(f(h(\varepsilon_f),v_1))\}.$$
    Thus, the result is $g(w_1,f(w_2,w_3)\mu\genOf{\Abs} g(y_1,y_2)\theta''\tau$, that is 
    $$g(z,f(y,h(z)))\genOf{\Abs} g(f(h(\varepsilon_f),v_1),f(u_2,h(f(h(\varepsilon_f),v_1)))).$$

\end{example}

\section{Stepcase, Case 2 Example}
 The following examples walk through the steps presented in the proof of Theorem~\ref{Thm:Completeness1}, \textbf{Stepcase}, case 2.
\label{ex:com2step}
 \begin{example}  Let us consider the AUE 

$$\varepsilon_f\labTriEq{x}f(f(h(b,a),a),f(b,h(a,b))))$$ 
and the final configuration below derived from this AUE. 
$$\langle\emptyset; \{\varepsilon_f\labTriEq{u_2}a\};\{\star\labTriEq{y_2}f(b,h(a,b)), \star\labTriEq{u_1}h(b,a)\};$$ $$\{x\mapsto f(f(u_1,u_2), y_2), y_1\mapsto f(u_1,u_2)\}\rangle$$
The next term is a generalization of the initial AUE. $$r=f(\overbrace{f(w,u)}^{r_1},\overbrace{f(y,h(u,z))}^{r_2})$$
Using the abstractions sets:
$$\Abst_{y_2}(f(b,h(a,b)),\{u_2\mapsto a\})=\{f(b,h(a,b)),f(b,h(u_2,b))\}$$
$$\Abst_{u_1}(h(b,a),\{u_2\mapsto a\})=\{h(b,a),h(b,u_2)\}$$
we get
$$r_1\genOf{\Abs} y_1\overbrace{\{y_1\mapsto f(u_1,u_2)\}}^{\theta''}\overbrace{\{ u_1\mapsto h(b,u_2)\}}^{\tau_1}=f(h(b,u_2),u_2)$$ 
Thus, $\mu'=\{w\mapsto h(b,u_2), u\mapsto u_2\}$. Notice that $u\in \mathcal{V}(r_2)$, hence $\nu=\{y\mapsto y', z\mapsto z'\}$. Applying both substitutions to $r$:
$$r\nu\mu'=f(\overbrace{f(h(b,u_2),u_2)}^{r_1\nu\mu'},\overbrace{f(y',h(u_2,z'))}^{r_2\nu\mu'}$$
Notice that $r_2\nu\mu'\genOf{\Abs} f(b,h(a,b))$. One possible choice for $\gamma'$ is $\{y'\mapsto b, u_2\mapsto a, z'\mapsto b\}$. Notice that $y',z'\in R'_2$ and thus $\nu'= \{y'\mapsto b, z'\mapsto b \}$. This leaves us with $r_2\nu\mu'\nu'\approxOf{\Abs} f(b,h(a,b))$ and $\tau_2 =\{y_2\mapsto h(b,h(u_2,b))\}$. Composing the two substitutions results in 
$$ \tau =\tau_1\tau_2 =\{u_1\mapsto h(b,u_2), y_2\mapsto f(b,h(u_2,b))\}$$
and 
$$f(f(w,u),f(y,h(u,z)))\genOf{\Abs} f(f(h(b,u_2),u_2),f(b,h(u_2,b))).$$
\end{example}

\begin{example} Consider the problem below. 

$$g(\varepsilon_f,a)\labTriEq{x_s}g(f(f(a,h(\varepsilon_f)),h(\varepsilon_f)),\varepsilon_f)$$ 
We can follow the next branch, which leads us to the problem\\ $(\varepsilon_f\labTriEq{x}f(f(a,h(\varepsilon_f)),h(\varepsilon_f)))$:
$$\langle \{\varepsilon_f\labTriEq{x}f(f(a,h(\varepsilon_f)),h(\varepsilon_f)), a \labTriEq{y}{\triangleq}\varepsilon_f\};\emptyset;\emptyset;\{x_s\mapsto g(x,y)\} \rangle$$
$$\langle \{\varepsilon_f\labTriEq{y_1}{\triangleq}f(a,h(\varepsilon_f)), a \labTriEq{y}\varepsilon_f\};\emptyset;\{\star \labTriEq{y_2}h(\varepsilon_f)\};$$ $$\{x_s\mapsto g(f(y_1,y_2),y),  x\mapsto f(y_1,y_2)\} \rangle$$

and the final configuration 
\begin{align*}
        \langle \emptyset;\{\varepsilon_f\labTriEq{u_1}a, a \labTriEq{y}\varepsilon_f\};\{\star \labTriEq{y_2}h(\varepsilon_f), \star\labTriEq{u_2}{\triangleq}h(\varepsilon_f)\}; & \\
         \{x_s\mapsto g(f(f(u_1,u_2),y_2),y), x\mapsto f(f(u_1,u_2),y_2), y_1\mapsto f(u_1,u_2)\} \rangle &
    \end{align*}

derived from this AUE. And the next is a generalization of the initial AUE 
$$r=f(\overbrace{f(w_1,h(f(w_3,z)))}^{r_1},\overbrace{h(f(w_1,f(w_2,z)))}^{r_2})$$
Using the abstractions sets:
$$\Abst_{y_2}(h(\varepsilon_f),\{u_1\mapsto a, y\mapsto \varepsilon_f\})=\{h(\varepsilon_f),h(f(y,u_1)),h(f(a,f(u_1,y))), \ldots\}$$
$$\Abst_{u_2}(h(\varepsilon_f),\{u_1\mapsto a, y\mapsto \varepsilon_f\})=\Abst_{y_2}$$
we get
$$r_1\genOf{\Abs} y_1\overbrace{\{y_1\mapsto f(u_1,u_2)\}}^{\theta''}\overbrace{\{ y_2\mapsto h(f(a,f(u_1,y))\}}^{\tau_1}=f(u_1,h(f(a,f(u_1,y))))$$ 
Thus, $\mu'=\{w_1\mapsto u_1, w_3\mapsto a, z\mapsto f(u_1,y)\}$. Notice that $z\in \mathcal{V}(r_2)$, thus $\nu=\{w_2\mapsto w'\}$. Applying both substitutions to $r$:
$$r\nu\mu'=f(\overbrace{f(u_1,h(f(a,y)))}^{r_1\nu\mu'},\overbrace{h(f(u_1,f(w',f(u_1,y))))}^{r_2\nu\mu'}$$
Notice that $r_2\nu\mu'\genOf{\Abs} h(\varepsilon_f)$. One possible choice for $\gamma'$ is $\{y\mapsto \varepsilon_f , u_1\mapsto \varepsilon_f, w'\mapsto \varepsilon_f\}$. Notice that $w'\in R'_2$ and thus $\nu'= \{w'\mapsto \varepsilon_f\}$. This leaves us with $r_2\nu\mu'\nu'\approxOf{\Abs} f(b,h())$ and $\tau_2 =\{y_2\mapsto h(b,h(u_2,b))\}$. Composing the two substitutions results in 
$$ \tau =\tau_1\tau_2 =\{u_1\mapsto h(u_2,b), y_2\mapsto f(b,h(u_2,b))\}$$
and 
$$f(f(w,u),f(y,h(u,z)))\genOf{\Abs} f(f(h(u_2,b),u),f(b,h(u_2,b))).$$
\end{example}

\end{document}